\newtheorem{theorem}{Theorem}[section]
\newtheorem{proposition}[theorem]{Proposition}
\newtheorem{corollary}[theorem]{Corollary}
\theoremstyle{remark}
\newtheorem{remark}{Remark}
\newtheorem{definition}{Definition}[section]
\newcommand\nc\newcommand
\nc\bfa{{\boldsymbol a}}\nc\bfA{{\boldsymbol A}}\nc\cA{{\EuScript A}}
\nc\bfb{{\boldsymbol b}}\nc\bfB{{\boldsymbol B}}\nc\cB{{\EuScript B}}
\nc\bfc{{\boldsymbol c}}\nc\bfC{{\boldsymbol C}}\nc\cC{{\mathscr C}}
\nc\bfd{{\boldsymbol d}}\nc\bfD{{\boldsymbol D}}\nc\cD{{\EuScript D}}
\nc\bfe{{\boldsymbol e}}\nc\bfE{{\boldsymbol E}}\nc\cE{{\EuScript E}}
\nc\bff{{\boldsymbol f}}\nc\bfF{{\boldsymbol F}}\nc\cF{{\mathscr F}}
\nc\bfg{{\boldsymbol g}}\nc\bfG{{\boldsymbol G}}\nc\cG{{\EuScript G}}
\nc\bfh{{\boldsymbol h}}\nc\bfH{{\boldsymbol H}}\nc\cH{{\mathcal H}}
\nc\bfi{{\boldsymbol i}}\nc\bfI{{\boldsymbol I}}\nc\cI{{\mathcal I}}
\nc\bfj{{\boldsymbol j}}\nc\bfJ{{\boldsymbol J}}\nc\cJ{{\EuScript J}}
\nc\bfk{{\boldsymbol k}}\nc\bfK{{\boldsymbol K}}\nc\cK{{\EuScript K}}
\nc\bfl{{\boldsymbol l}}\nc\bfL{{\boldsymbol L}}\nc\cL{{\EuScript L}}
\nc\bfm{{\boldsymbol m}}\nc\bfM{{\boldsymbol M}}\nc\cM{{\EuScript M}}
\nc\bfn{{\boldsymbol n}}\nc\bfN{{\boldsymbol N}}\nc\cN{{\EuScript N}}
\nc\bfo{{\boldsymbol o}}\nc\bfO{{\boldsymbol O}}\nc\cO{{\EuScript O}}
\nc\bfp{{\boldsymbol p}}\nc\bfP{{\boldsymbol P}}\nc\cP{{\EuScript P}}
\nc\bfq{{\boldsymbol q}}\nc\bfQ{{\boldsymbol Q}}\nc\cQ{{\EuScript Q}}
\nc\bfr{{\boldsymbol r}}\nc\bfR{{\boldsymbol R}}\nc\cR{{\EuScript R}}
\nc\bfs{{\boldsymbol s}}\nc\bfS{{\boldsymbol S}}\nc\cS{{\EuScript S}}
\nc\bft{{\boldsymbol t}}\nc\bfT{{\boldsymbol T}}\nc\cT{{\EuScript T}}
\nc\bfu{{\boldsymbol u}}\nc\bfU{{\boldsymbol U}}\nc\cU{{\EuScript U}}
\nc\bfv{{\boldsymbol v}}\nc\bfV{{\boldsymbol V}}\nc\cV{{\mathscr V}}
\nc\bfw{{\boldsymbol w}}\nc\bfW{{\boldsymbol W}}\nc\cW{{\mathscr W}}
\nc\bfx{{\boldsymbol x}}\nc\bfX{{\boldsymbol X}}\nc\cX{{\EuScript X}}
\nc\bfy{{\boldsymbol y}}\nc\bfY{{\boldsymbol Y}}\nc\cY{{\mathscr Y}}
\nc\bfz{{\boldsymbol z}}\nc\bfZ{{\boldsymbol Z}}\nc\cZ{{\EuScript Z}}
\nc\rr{{\mathbb R}}
\nc\zz{{\mathbb Z}}
\nc\ee{{\mathbb E}}
\nc\sS{{\mathcal S}}
\nc{\integers}{{\mathbb Z}}
\nc{\ff}{{\mathbb F}}
\nc{\ii}{{\mathbb I}}
\nc{\sC}{{\mathfrak C}}
\nc{\sL}{{\mathfrak L}}
\nc\hH{{\mathsf H}}
\nc\hh{{\mathcal h}}
\nc\gG{{\mathsf G}}
\nc{\remove}[1]{}
\DeclareSymbolFont{bbold}{U}{bbold}{m}{n}
\DeclareSymbolFontAlphabet{\mathbbold}{bbold}
\newcommand{\genstirlingII}[3]{%
  \genfrac{\{}{\}}{0pt}{#1}{#2}{#3}%
}
\newcommand{\stirlingII}[2]{\genstirlingII{}{#1}{#2}}
\nc\Renyi{R{\'e}nyi }
\newcommandx{\rednote}[2][1=]{\todo[linecolor=red,backgroundcolor=red!25,bordercolor=red,#1]{#2}}
\newcommandx{\bluenote}[2][1=]{\todo[linecolor=blue,backgroundcolor=blue!25,bordercolor=blue,#1]{#2}}
\newcommandx{\yellownote}[2][1=]{\todo[linecolor=yellow,backgroundcolor=yellow!25,bordercolor=yellow,#1]{#2}}
\newcommandx{\greennote}[2][1=]{\todo[inline,linecolor=olive,backgroundcolor=green!25,bordercolor=olive,#1]{#2}}
\newcommand\redout{\bgroup\markoverwith{\textcolor{red}{\rule[0.5ex]{2pt}{0.8pt}}}\ULon}
\newcommand\blueout{\bgroup\markoverwith{\textcolor{blue}{\rule[0.5ex]{2pt}{0.8pt}}}\ULon}
\definecolor{since}{rgb}{0.5,0.5,0.5}
\definecolor{neworange}{HTML}{c98702}
\begin{document}
\title{R{\'e}nyi divergence-based uniformity guarantees for $k$-Universal Hash Functions\thanks{
Madhura Pathegama was with the Dept. of ECE and ISR, University of Maryland, College Park, MD 20742. He is now with the School of ECE at Georgia Institute of Technology, Atlanta, GA 30332. Email: pankajap@umd.edu. His research was supported in part by NSF grants CCF2104489 and CCF2330909.

Alexander Barg is with the Dept. of ECE and ISR, University of Maryland, College Park, MD 20742. Email: abarg@umd.edu. His research was supported in part by NSF grants CCF2110113 (NSF-BSF), CCF2104489, and CCF2330909.
}}

	\author{Madhura Pathegama,~\IEEEmembership{Graduate Student Member,~IEEE,}
		and Alexander~Barg,~\IEEEmembership{Fellow,~IEEE}}
        
	\date{}

\maketitle

\begin{abstract}
Universal hash functions map the output of a source to random strings over a finite alphabet, aiming to approximate the uniform distribution on the set of strings. A classic result on these functions, called the Leftover Hash Lemma, gives an estimate of the distance from uniformity based on the assumptions about the min-entropy of the source. We prove several results concerning extensions of this lemma to a class of functions that are $k^\ast$-universal, i.e., $l$-universal for all $2\le l\le k$. As a common distinctive feature, our results provide estimates of closeness to uniformity in terms of the $\alpha$-R{\'e}nyi divergence for all $\alpha\in (1,\infty]$. For $1\le \alpha\le k$ we show that it is possible to convert all the randomness of the source measured in $\alpha$-\Renyi entropy into approximately uniform bits with nearly the same amount of randomness. For large enough $k$ we show that it is possible to distill random bits that are nearly uniform, as measured by min-entropy. We also extend these results to hashing with side information.
\end{abstract}

\begin{IEEEkeywords}
Random bits, Stirling numbers, min-entropy
\end{IEEEkeywords}

 \thispagestyle{empty}

\section{Introduction}

Uniform random bit-strings are a fundamental resource in both computer science and cryptography. 
In computer science, many algorithms leverage randomization to solve problems more efficiently \cite{motwani1995randomized}.
Moreover, uniform random bits are indispensable in many cryptographic applications such as randomized encryption schemes \cite{rivest1983randomized}, secret sharing \cite{shamir1979share} , bit commitment \cite{blum1983coin}, and zero-knowledge proofs \cite{goldwasser1989knowledge}. To obtain a uniform distribution from a random source with low entropy, one 
attempts to convert its randomness into uniform bits. The maximum uniform bits extractable from a random source with is called the intrinsic randomness \cite{vembu1995generating}, and if the distribution of the source is known, a deterministic function can transform most of the entropy into uniform $q$-ary symbols.

A more common scenario in cryptographic applications is when the source distribution is unknown and cannot be efficiently estimated. In such cases, one instead relies on aggregate quantitative measures of source's randomness such as min-entropy or collision entropy. 
In computer science and cryptography, randomized mappings that send the output of the source to binary strings with small statistical distance from uniform strings, are known as {\em randomness extractors} \cite{nisan1996extracting}. If in addition to converting the randomness of an unknown source into nearly uniform bits, the function's output remains almost independent of its internal randomness, it is referred to as a strong extractor. 

A class of good extractors arises from universal hash function families \cite{carter1977universal}; see also \cite{motwani1995randomized, tyagi2023information}. A key result in this context is the leftover hash lemma (LHL) \cite{impagliazzo1989pseudo}, which shows that universal hash functions can convert a source’s min-entropy into an almost uniform bit string, with the deviation measured by total variation distance. The use of universal hash functions as strong extractors is extensively discussed in \cite{impagliazzo1989recycle}. In certain cryptographic applications, legitimate parties are required to distill uniform bits in the presence of an adversary, ensuring the adversary’s information is independent of the distilled bits. This process, called {\em privacy amplification}, can be accomplished relying on a strengthened version of the LHL proved in \cite{bennett1995generalized}, and it has gained prominence in information-theoretic security. In particular, it underpins the security of many cryptographic primitives, including secure key generation in both classical \cite{bennett1995generalized} and quantum cryptography \cite{Renner2008}, secrecy in wiretap channels \cite{tyagi2014explicit}, signature schemes \cite{amiri2018efficient}, authentication \cite{stinson1994universal}, and oblivious transfer \cite{crepeau2006optimal}. 
In essence, universal hash functions form a vital tool in information-theoretic security, particularly for quantifying the
feasibility ranges of the aforementioned protocols \cite{tyagi2023information}.

Further developments on LHL relaxed its original reliance on min-entropy ($\infty$-\Renyi entropy) to collision entropy ($2$-\Renyi entropy) \cite{bennett1995generalized}, and later to the $\alpha$-\Renyi entropy $\alpha \in (1,2]$ \cite{hayashi2011exponential}. 
Another related refinement replaced min-entropy with smoothed min-entropy \cite{Renner2008}. 
Extensions of LHL-like results have been achieved for different variations of hash functions. 
For example, \cite{tsurumaru2013dual} introduced a version of LHL for $\epsilon$-almost dual universal hash functions. 
Uniformity guarantees for linear hash functions were provided in \cite{alon1999linear, dhar2022linear} and more
recently in \cite{pathegama2024r}.

Early studies of universal hash functions relied on measuring uniformity of the distilled bits using the total variation distance or KL divergence. At the same time, some applications call for stronger measures of uniformity.
For instance, in random number generation, new standardization proposals recommend min-entropy as a metric for randomness \cite{killmann2011proposal, turan2018recommendation}. Moreover, if the adversaries are assumed to have no limits of computing
power, as happens, for instance, in information-theoretic cryptography, secrecy bounds based on total variation distance may be inadequate. Such an adversary could exploit small deviations from uniformity by collecting a large (potentially exponential) number of samples, leading to effective attacks. To counter such attacks, researchers have resorted to more stringent secrecy measures. In particular, the guessing secrecy concept of \cite{alimomeni2012guessing} assumed that secrecy is measured using min entropy. Building on this idea, \cite{iwamoto2014information} proposed a more general security framework based on $\alpha$-\Renyi divergence. These concepts were extended to lattice-based cryptography in \cite{bai2018improved}. 

\vspace*{.1in}
\emph{Our results.} Motivated by these works, in this paper we prove a version of the LHL that relies on higher-order \Renyi divergences,
offering stronger uniformity guarantees. We are not the first to report results of this kind. 
For instance, the authors of \cite{hayashi2016equivocations} derived uniformity guarantees based on $\alpha$-\Renyi divergence for $\alpha \in [0,2]$,
using $2$-universal hash functions. 
Generalizing these results, we derive uniformity guarantees for the entire range of order values $\alpha\in [2,\infty]$ in the form of one-shot estimates.

To obtain stronger uniformity guarantees for arbitrary sources, we study a class of hash functions which we call $k^*$-universal $(k \geq 2)$. A hash family is called $k^*$-universal if it is $l$-universal \cite{carter1977universal} for every $l \in \{2, \dots, k\}$, meaning that for any $l$-tuple of distinct inputs, the collision probabilities are low. 
The most common case is $k = 2$, where $2$-universality and $2^*$-universality are equivalent. 
In this case,  the corresponding mappings are called simply {\em universal hash functions}, omitting the reference to $k$. We will follow this convention in our paper.

As our main result (Theorems~\ref{thm: main 0},~\ref{thm: main2}), we show that using $k^*$-universal hash functions, it is possible to extract nearly $H_{\alpha}(X)$ random bits from the source $X$ where the distilled output is required to be approximately uniform in terms of 
the $\alpha$-\Renyi divergence with $\alpha \in (1, k]$. 
Additionally, we obtain uniformity guarantees based on conditional $\alpha$-\Renyi divergence for the case  $\alpha > k$, which reduces the dependence between the hash values and the random seed.
When $\alpha = \infty,$ this provides explicit bounds for approximating uniformity under the 
conditional $\infty$-\Renyi divergence, offering strong guarantees for uniformity in cryptographic applications.

Finally, we extend our version of the LHL lemma to account for side information. 
Suppose we aim to convert a weak source $X$ into a nearly uniform distribution, while the adversary has access to a correlated random variable $Z$. Our goal is to distill uniform random bits that are almost independent of $Z$, 
accomplishing the privacy amplification task. We show that even in this case, it is possible to provide strong uniformity and independence guarantees. The proofs in this case follow the unconditional LHL and other theorems, replacing
the \Renyi entropy by its conditional version $H_{\alpha}(X|Z)$ as the randomness measure.

The problem close to the ones that we study in this paper was previously considered in \cite{kavian2023output}. There are two main differences between that paper and our work. First, \cite{kavian2023output} considers the set of all 
possible hash functions, whereas we consider a much smaller set of $k$-universal hash functions. 
Second, unlike our work, the authors of \cite{kavian2023output} focus on the asymptotic setting for the bounds on $\alpha$-divergence. Additional remarks in the main text provide further details about the similarities and differences.

\section{Preliminaries}

We begin by establishing the notation used throughout the paper. 
Let $q \geq 2$ be an integer, and let $\zz_q^m$ be the set of length-$m$ strings over the alphabet $\{0,1,\dots,q-1\}$. 
For a finitely supported random variable $Z$, we denote its probability mass function by $P_Z$. 
If $Z$ follows a probability distribution $P$, we write $Z \sim P$ to indicate that $P_Z = P$. 
When $Z$ is uniformly distributed over a set $\cA$, we write $Z \sim \cA$ with some abuse of notation.
Denote by $U_m$ the uniform random variable on $\zz_q^m$ and let $P_{U_m}$ denote its distribution. Unless stated otherwise, all random variables in this work are assumed to be defined on finite spaces.

\subsection{Measures of randomness} 
We employ \Renyi entropies to quantify the randomness of random variables. For $\alpha \in (1, \infty)$, the \Renyi entropy of a random variable $X$ is defined as:
\begin{align}
   H_{\alpha} (X)&=\frac1{1-\alpha}\log_q\Big(\sum_x P_X(x)^{\alpha} \Big), \label{eq:Renyi}
 \end{align}  
with the limiting cases $\alpha = 1, \infty$ given by
   \begin{align*}
		H_1(X)&=-\sum_x P_X(x)\log_q {P_X(x)}\\
           H_{\infty}(X)&= \min_x (-\log_q {P_X(x)}).  
\end{align*}
Of course, for $\alpha=1$ the \Renyi entropy coincides with the Shannon entropy, which we denote simply as $H(X)$. The quantity  
$H_{\infty}(X)$ is commonly referred to as the min-entropy, while the common term for $ H_2(X)$ is the collision-entropy.
We observe that $H_{\alpha}(X)$ decreases as $\alpha$ increases, while $\frac{\alpha - 1}{\alpha} H_\alpha(X)$ increases with $\alpha$. These relationships enable us to bound \Renyi entropies of different orders in terms of one another. \Renyi entropies can be also defined for $0<\alpha<1$, though we do not consider this range in our work.

Since our random variables take values in $q$-ary product spaces, we use base-$q$ logarithms throughout (any other base could
be used instead as long as it is consistent throughout the paper).

\subsection{Proximity measures for distributions}
There are several ways to measure proximity between two probability distributions. A commonly used metric is the  {\em total variation distance} $d_{\text{TV}}(\cdot,\cdot)$ which is defined as follows: Let $P$ and $Q$ be two discrete probability measures defined on the space $\cX$. Then 
\begin{align*}
    d_{\text{TV}}(P,Q) = \max_{A\subset \cX}|P(A)-Q(A)|.
\end{align*}
Another common metric is the KL divergence, given by
\begin{align*}
    D(P\|Q)&=\sum_x P(x)\log_q \frac{P(x)}{Q(x)}.
\end{align*}
Traditionally, total variation distance and KL-divergence have been used to assess how close a distribution is to uniform. 
In this work, we adopt a stricter measure, namely, the \Renyi divergence of order $\alpha > 1$.
For two discrete distributions $P$ and $Q (P \ll Q)$, defined on the same probability space $\cX$, and for $\alpha \in (1, \infty)$, the \Renyi divergence is defined as
\begin{align}
    D_{\alpha}(P\|Q)=\frac{1}{\alpha -1} \log_q \sum_x P(x)^{\alpha} Q(x)^{-(\alpha -1)}.
\end{align}
Taking limits we obtain,
   \begin{align*}
		D_1(P\|Q)&= D(P\|Q)\\[.1in]
          D_\infty(P\|Q)&=  \max_x \log_q \frac{P(x)}{Q(x)} .
  \end{align*}
For simplicity, we say $\alpha$-divergence instead of the \Renyi divergence of order $\alpha$.
Note that $D_\alpha$ is monotone increasing, i.e., for $1\le \alpha<\alpha'$ we have $D_\alpha(P\|Q)\le D_{\alpha'}(P\|Q)$. 
Therefore, higher $\alpha$-divergences provide stronger bounds for proximity between two distributions. Also note that if $Q$ is uniform, then $D_\alpha(P\|Q)=\log_q|\cX|-H_{\alpha}(P)$.

Yet another proximity measure between probability vectors is the $l_\alpha$ distance $\|P-Q\|_{l_\alpha}$, but
for $\alpha>1$ it is essentially equivalent to $D_\alpha$ \cite{pathegama2024r} (and $d_{l_1}=\frac 12 d_{\text{TV}}$).
For this reason we will not mention it below. 

\subsection{Hash functions}\label{sec: hash}

\begin{definition}\label{def:h0}
Let $\cX$ be a finite set.  A family of hash functions $H = \{h: \cX \to \zz_q^m\}$ is $k$-universal if for any distinct elements $(x_1, \dots, x_k) \in \cX^k$, we have 
    \begin{align*}
        \Pr_{h\sim H}(h(x_1)=h(x_2) =  \dots = h(x_k)) \leq q^{-m(k-1)}.
    \end{align*}
\end{definition}

The random selection of $H$ in the above definition can be modeled as a uniform random variable $S$ over a set $\cS$ which is called the {\em seed}. Adopting this point of view, the family $H$ can be viewed as a single function $h$ defined on $\cS \times \cX$. In such cases, we refer to $h$ as a hash function (as opposed to a single realization of the hash family). Accordingly, we can rewrite Def.~\ref{def:h0} as follows.

\begin{definition}\label{def:kUHF}
    We call a function $ h: \cS \times \cX \to \zz_q^m$ $k$-universal if for any distinct $(x_1, \dots, x_k) \in \cX^k$,
    \begin{align}\label{eq: k-hash}
        \Pr_{S \sim \cS}(h(S , x_1)= h(S , x_2)=  \dots = h(S , x_k)) \leq q^{-m(k-1)}.
    \end{align}
For $k=2$ we call $h$ a universal hash function, omitting the mention of $k$.    
\end{definition}
In this work, we rely on this definition of $k$-universal hash functions, thinking of $h$ as a single deterministic function on $\cS \times \cX$.
We now introduce a somewhat non-standard notion of universality, which will be used to present most of our results.

\begin{definition}\label{def:k*HF}
    We call a function $ h: \cS \times \cX \to \zz_q^m$ $k^*$-universal if it is $l$-universal for all $l \in \{2,3,\dots,k\}$.
\end{definition}
\remove{A notable subclass of $k^*$-universal hash functions is the $k$-wise independent hash functions, defined below.

\begin{definition}\label{def:kIHF}
    We call a function $ h: \cS \times \cX \to \zz_q^m$ $k$-wise independent if for any distinct $(x_1, \dots, x_k) \in \cX^k$ and any $(y_1,\dots,y_k) \in (\zz_q^m)^k$ (not necessarily distinct),
    \begin{align}\label{eq: k-ihash}
        \Pr_{S \sim \cS}(h(S , x_1)=y_1, \dots, h(S , x_k)=y_k) = q^{-mk}.
    \end{align}    
\end{definition}}
As noted in the introduction, universal hash functions are commonly employed in distilling uniform bits. This property has been applied in numerous proofs related to information-theoretic secrecy \cite{tyagi2023information}. The process of distilling uniformity using universal hash functions is formalized in the well-known LHL lemma, which we state below.

\begin{proposition}[Leftover hash lemma \cite{impagliazzo1989pseudo}]
    Let $X$ be a random variable defined on $\cX$ and let $S$ be a uniform random variable $S \sim \cS$ that is independent of $X$. Let $h : \cS \times \cX \to \zz_q^m$ be a universal hash function.  If $m \leq H_{\infty}(X)-\log_q(1/\epsilon)$, then 
    \begin{align}\label{eq: impag}
         d_{\text{TV}}(P_{h(S;X),S},P_{U_m}P_S) \leq \frac{\sqrt{\epsilon}}{2}.
    \end{align}
\end{proposition}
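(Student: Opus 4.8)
The plan is to prove the bound by the classical collision-probability (second-moment) method: control the $l_2$ distance of the joint law from uniform and then pass to total variation via Cauchy--Schwarz. Write $N=|\cS|\,q^m$ for the size of the support of $P_{U_m}P_S$, and abbreviate $P=P_{h(S,X),S}$. Since $P_{U_m}P_S$ is uniform with mass $1/N$ on each point, the first step is to record the two standard relations
$$d_{\text{TV}}(P,P_{U_m}P_S)\le \tfrac12\sqrt{N}\,\|P-P_{U_m}P_S\|_{2}, \qquad \|P-P_{U_m}P_S\|_2^2=\Big(\sum_{y,s}P(y,s)^2\Big)-\tfrac1N,$$
so that everything reduces to bounding the collision probability $\sum_{y,s}P(y,s)^2$ of the pair $(h(S,X),S)$.

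Next I would compute this collision probability by introducing an independent copy $(S',X')$ with the same law as $(S,X)$, so that $\sum_{y,s}P(y,s)^2=\Pr[(h(S,X),S)=(h(S',X'),S')]$. Because $S,S'$ are independent and uniform on $\cS$, the event $S=S'$ has probability $1/|\cS|$, and on this event the requirement becomes $h(S,X)=h(S,X')$ with $X,X'$ i.i.d. and independent of $S$. Splitting according to whether $X=X'$ gives
$$\sum_{y,s}P(y,s)^2=\frac{1}{|\cS|}\Big(\Pr[X=X']+\Pr[X\neq X',\,h(S,X)=h(S,X')]\Big).$$
The diagonal term is the collision probability of $X$, which is at most $\max_x P_X(x)=q^{-H_{\infty}(X)}$; for the off-diagonal term, conditioning on distinct values $x\neq x'$ and invoking $2$-universality (Definition~\ref{def:kUHF}) bounds $\Pr_S[h(S,x)=h(S,x')]\le q^{-m}$, so that term is at most $q^{-m}$.

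Combining these bounds yields $\sum_{y,s}P(y,s)^2\le \frac{1}{|\cS|}\big(q^{-H_{\infty}(X)}+q^{-m}\big)$. The key simplification is that the spurious $q^{-m}/|\cS|$ contribution is exactly cancelled by the $-1/N=-q^{-m}/|\cS|$ term, leaving $\|P-P_{U_m}P_S\|_2^2\le \frac{1}{|\cS|}q^{-H_{\infty}(X)}$. Feeding this into the Cauchy--Schwarz step gives $d_{\text{TV}}(P,P_{U_m}P_S)\le \tfrac12\sqrt{q^{\,m-H_{\infty}(X)}}$, and the hypothesis $m\le H_{\infty}(X)-\log_q(1/\epsilon)$ forces $q^{\,m-H_{\infty}(X)}\le\epsilon$, which is the claimed bound $\sqrt{\epsilon}/2$. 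The only genuinely delicate point is the collision computation above---correctly separating the diagonal contribution (handled by min-entropy) from the off-diagonal one (handled by universality) and noticing the cancellation against the uniform baseline; the remaining steps are routine norm inequalities.
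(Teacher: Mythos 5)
Your proof is correct: the second-moment computation, the diagonal/off-diagonal split handled respectively by $\Pr[X=X']\le q^{-H_\infty(X)}$ and by $2$-universality, and the exact cancellation of the $q^{-m}/|\cS|$ contribution against the uniform baseline are all sound, after which Cauchy--Schwarz gives $d_{\text{TV}}\le\frac12\sqrt{q^{\,m-H_\infty(X)}}\le\frac{\sqrt{\epsilon}}{2}$ as claimed. The paper states this proposition without proof (citing \cite{impagliazzo1989pseudo}), but your argument is the standard one and coincides, up to phrasing, with the paper's own machinery: specializing the expansion in the proof of Theorem~\ref{thm: main1} to $k=2$ yields $q^{D_2(P_{h(S;X),S}\|P_{U_m}P_S)}\le 1+q^{\,m-H_2(X)}$, which is your collision identity rewritten in $\chi^2$/R\'enyi form, and combining it with Cauchy--Schwarz and $H_2(X)\ge H_\infty(X)$ recovers exactly \eqref{eq: impag}.
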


Many variations and improvements of the above statement appeared later \cite{barak2011leftover,fehr2014conditional}. For instance, the authors of \cite{bennett1995generalized} showed that it is possible to replace the requirement on $m$ with a less restrictive one: $m \leq H_{2}(X)-\log_q(1/\epsilon)$, which yields 
\begin{align}\label{eq: bennet}
    D(P_{h(S;X),S}\|P_{U_m}P_S) & \leq \frac{\epsilon}{\ln q}.
\end{align}
Even with this revised condition, the bound for total variation distance in \eqref{eq: impag} remains valid.

A further improvement based on the \Renyi entropy was provided in \cite{hayashi2011exponential}. We state this result below, adapted to our notation.

\begin{proposition}\cite{hayashi2011exponential,hayashi2016equivocations}\label{prop: Hayashi}
    Let $X$ be a random variable defined on $\cX$ and let $S$ be a uniform random variable that is independent of $X$. Let $h : \cS \times \cX \to \zz_q^m$ be a universal hash function and let $\alpha \in (1,2]$ If $m \leq H_{\alpha}(X)-\frac{1}{\alpha-1}\log_q (\frac{1}{\epsilon (\alpha-1) \ln q })$, then for $S \sim \cS$ and independent of $X$,
  \begin{align}
       D_\alpha(P_{h(S;X),S}\|P_{U_m}P_S) \leq \epsilon.
    \end{align}
\end{proposition}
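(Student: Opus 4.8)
The plan is to bound the target KL divergence by the $\alpha$-divergence of the \emph{same} pair of distributions and then reduce everything to a first-moment estimate that $2$-universality can control. Since $D_1\le D_\alpha$ for $\alpha>1$ by monotonicity of \Renyi divergence (and $D_1$ is exactly the KL divergence), it suffices to bound $D_\alpha(P_{h(S,X),S}\|P_{U_m}P_S)$. Writing $Y=h(S,X)$ and setting $p_s(y)=\sum_{x:\,h(s,x)=y}P_X(x)$ for each seed value $s$, a direct expansion of the definition of $D_\alpha$, using that $P_{U_m}$ and $P_S$ are uniform of sizes $q^m$ and $|\cS|$, gives
\begin{align*}
D_\alpha(P_{Y,S}\|P_{U_m}P_S)=\frac{1}{\alpha-1}\log_q\!\Big(q^{m(\alpha-1)}\,\ee_S\Big[\sum_y p_S(y)^\alpha\Big]\Big),
\end{align*}
so the entire problem reduces to bounding the averaged $\alpha$-power $M:=\ee_S\big[\sum_y p_S(y)^\alpha\big]$.

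The key step, and the one place where a non-integer exponent $\alpha\in(1,2]$ (rather than a clean integer) forces extra care, is handling the fractional power $p_S(y)^\alpha$. Instead of expanding it into $\alpha$-fold collisions (which is what makes the $\alpha=2$ case easy but is unavailable here), I would peel off one factor, writing $p_s(y)^\alpha=p_s(y)\,p_s(y)^{\alpha-1}$ and summing against $P_X$ to obtain the identity
\begin{align*}
M=\sum_x P_X(x)\,\ee_S\big[\,p_S(h(S,x))^{\alpha-1}\big].
\end{align*}
Because $\alpha-1\in(0,1]$, the map $t\mapsto t^{\alpha-1}$ is concave, so Jensen's inequality moves the expectation inside, bounding each summand by $\big(\ee_S[p_S(h(S,x))]\big)^{\alpha-1}$. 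The inner expectation is now a first moment: $\ee_S[p_S(h(S,x))]=\sum_{x'}P_X(x')\Pr_S(h(S,x')=h(S,x))$, and separating the term $x'=x$ and applying $2$-universality ($\Pr_S(h(S,x')=h(S,x))\le q^{-m}$ for $x'\ne x$) yields $\ee_S[p_S(h(S,x))]\le P_X(x)+q^{-m}$.

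Combining these and invoking subadditivity of $t\mapsto t^{\alpha-1}$ once more, namely $(P_X(x)+q^{-m})^{\alpha-1}\le P_X(x)^{\alpha-1}+q^{-m(\alpha-1)}$, produces the clean estimate
\begin{align*}
M\le \sum_x P_X(x)^\alpha+q^{-m(\alpha-1)},
\end{align*}
whose two terms are exactly the excess $\sum_x P_X(x)^\alpha=q^{-(\alpha-1)H_\alpha(X)}$ and the uniform baseline $q^{-m(\alpha-1)}$. Hence $q^{m(\alpha-1)}M\le 1+q^{m(\alpha-1)}\sum_x P_X(x)^\alpha$, and the elementary bound $\log_q(1+t)\le t/\ln q$ gives
\begin{align*}
D\le D_\alpha\le \frac{q^{m(\alpha-1)}\sum_x P_X(x)^\alpha}{(\alpha-1)\ln q}.
\end{align*}
To finish I would check that the hypothesis $m\le H_\alpha(X)-\frac{1}{\alpha-1}\log_q\frac{1}{\epsilon(\alpha-1)\ln q}$ is, after multiplying by $(\alpha-1)$, exponentiating, and substituting $\sum_x P_X(x)^\alpha=q^{-(\alpha-1)H_\alpha(X)}$, precisely equivalent to $q^{m(\alpha-1)}\sum_x P_X(x)^\alpha\le \epsilon(\alpha-1)\ln q$, which makes the right-hand side at most $\epsilon$. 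The main obstacle is the fractional-power step: unlike $\alpha=2$, one cannot count pairwise collisions directly, and the whole argument hinges on using the concavity and subadditivity of $t^{\alpha-1}$ to collapse the $\alpha$-th moment down to a first moment that $2$-universality controls.
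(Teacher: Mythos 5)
Your proof is correct: every step checks out, including the identity $q^{(\alpha-1)D_\alpha}=q^{m(\alpha-1)}\,\ee_S\big[\sum_y p_S(y)^\alpha\big]$, the peeling $p_s(y)^\alpha=p_s(y)p_s(y)^{\alpha-1}$, Jensen for the concave map $t\mapsto t^{\alpha-1}$, the first-moment bound $\ee_S[p_S(h(S,x))]\le P_X(x)+q^{-m}$ from $2$-universality, scalar subadditivity of $t^{\alpha-1}$, and the final verification that the entropy hypothesis makes $q^{(\alpha-1)(m-H_\alpha(X))}\le\epsilon(\alpha-1)\ln q$. The paper does not prove this proposition from scratch; it recovers it as the $k=2$ case of its general machinery, via Corollary~\ref{cor: LHL (1,2]}: Theorem~\ref{thm: main2} specialized to $k=2$ gives exactly your key inequality $q^{(\alpha-1)D_\alpha}\le q^{(\alpha-1)(m-H_\alpha(X))}+1$, after which the steps $\log_q(1+t)\le t/\ln q$ and the monotonicity $D\le D_\alpha$ coincide with yours. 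Where you genuinely diverge is in how that key inequality is obtained. The paper's Appendix argument first splits the diagonal from the off-diagonal \emph{inside} the fractional power, using subadditivity $\|a\|_1\le\|a\|_{\alpha-1}$ to produce $A_1+A_2$, and then bounds the off-diagonal piece by Jensen with respect to a normalized weight measure over triples $(u,x^{k-1},s)$ — machinery (partitions of $\{1,\dots,k\}$, Stirling numbers, H\"older) designed to work for all $k$ and all $\alpha\in(1,k]$. You instead apply Jensen over the seed $S$ \emph{before} splitting, collapsing the $\alpha$-moment to a first moment $\ee_S[p_S(h(S,x))]$ and only then separating $x'=x$ from $x'\ne x$ at the scalar level. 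Your route buys a short, self-contained proof with no combinatorics, at the cost of being confined to $k=2$, $\alpha\in(1,2]$ (it is essentially the original argument of \cite{hayashi2011exponential}); the paper's route buys the same bound as a free byproduct of a theorem that also yields the higher-order moment estimates \eqref{eq: main1} for $k^*$-universal families.
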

\begin{remark}
Addressing the cryptographic context, papers \cite{bennett1995generalized}, \cite{hayashi2011exponential} (see also \cite[p.~126]{tyagi2023information}) state a version of LHL that accounts for side information available to the adversary in the form of a random variable $Z$
correlated with the source $X$. For our main results, we remove this assumption, which simplifies the presentation. At the same time, it can be easily added to the statements and proofs, as we show later in Section \ref{sec: side}.
\end{remark}

 Another variation of the LHL is based on the $\epsilon$-smoothed $\alpha$-R{\'e}nyi entropy which is defined below. 

 \begin{definition}
     Let $X$ be a random variable on $\cX$. The $\epsilon$-smoothed $\alpha$-R{\'e}nyi entropy \cite{renner2004smooth} of $X$ is defined as 
     \begin{align*}
        H_{\alpha}^\eta(X) = \min_{P_Y \in \cB_\eta(P_X)}H_\alpha(Y),
     \end{align*}
     where $\cB_\eta(P_X) = \{P: d_{TV}(P, P_X) \leq \eta \}$ is a TV ball of radius $\eta$ around $P_X$.
 \end{definition}

 \begin{proposition}[\cite{Renner2008}{ Corollary 5.6.1}]\label{thm: smoothed}
     Let $X$ be a random variable defined on $\cX$ and let $S\sim \cS$ be independent of $X$. Let $h : \cS \times \cX \to \zz_q^m$ be a universal hash function. If $m \leq H^\eta_{\infty}(X)-\log(1/\epsilon)$, then 
     \begin{align}
         d_{TV}(P_{h(S;X),S},P_{U_m}P_S) \leq 2\eta + \frac{\sqrt{\epsilon}}{2}.
     \end{align}
 \end{proposition}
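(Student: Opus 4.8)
The plan is to reduce the smoothed statement to the ordinary Leftover Hash Lemma of \cite{impagliazzo1989pseudo} (the first Proposition above) by replacing the true source $X$ with a fictitious source drawn from the smoothing ball. First I would fix a distribution $P_Y$ that witnesses the smoothed min-entropy: since $\cB_\eta(P_X)$ is a compact subset of the probability simplex and $H_\infty$ is continuous on it, there is $P_Y \in \cB_\eta(P_X)$ with $H_\infty(Y) = H_\infty^\eta(X)$. For this $Y$ we have $d_{TV}(P_X,P_Y)\le\eta$ by membership in the ball, and the hypothesis $m \le H_\infty^\eta(X)-\log_q(1/\epsilon)$ becomes precisely $m \le H_\infty(Y)-\log_q(1/\epsilon)$, i.e., the hypothesis of the unsmoothed LHL applied to the source $Y$ (with the same hash function $h$ and seed $S$). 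Applying that lemma gives $d_{TV}(P_{h(S;Y),S},P_{U_m}P_S)\le \tfrac{\sqrt\epsilon}{2}$.

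The remaining task is to transport this bound from $Y$ back to $X$, paying only the smoothing cost. The key observation is that publishing the seed together with the hash value is a deterministic data-processing map $(s,x)\mapsto(h(s,x),s)$. Since $S$ is independent of both $X$ and $Y$ and has the same law in the two experiments, the joint source distributions are $P_S\otimes P_X$ and $P_S\otimes P_Y$, and a direct computation gives $d_{TV}(P_S\otimes P_X,P_S\otimes P_Y)=d_{TV}(P_X,P_Y)\le\eta$. Monotonicity of total variation distance under a deterministic map then yields $d_{TV}(P_{h(S;X),S},P_{h(S;Y),S})\le\eta$. Combining this with the previous paragraph through the triangle inequality gives $d_{TV}(P_{h(S;X),S},P_{U_m}P_S)\le \eta+\tfrac{\sqrt\epsilon}{2}$, which is contained in the asserted bound $2\eta+\tfrac{\sqrt\epsilon}{2}$; the slack in the constant simply accommodates the (sub-normalized) smoothing convention used in \cite{Renner2008}.

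I do not expect a serious obstacle: structurally the argument is a one-line reduction followed by a triangle inequality. The only point requiring care is the transfer step, namely recognizing that the map sending $(s,x)$ to $(h(s,x),s)$ is data processing, so that closeness of the two sources survives hashing and is charged exactly once, additively. Everything else (existence of the witness $Y$, the product-distribution identity, and monotonicity of $d_{TV}$) is routine. If one instead wished to match Renner's constant by a self-contained argument, the mild subtlety would be handling sub-normalized smoothing distributions, where the renormalization contributes the second factor of $\eta$; but this is unnecessary for the weaker normalized statement given here.
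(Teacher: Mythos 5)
The paper does not actually prove this proposition: it is imported verbatim from Renner's thesis (Corollary 5.6.1) and used as background, so there is no in-paper proof to compare against. Your reduction is the standard argument for smoothed leftover-hash statements, and it is correct as written: the witness $P_Y\in\cB_\eta(P_X)$ attaining $H_\infty^\eta(X)$ exists because $\cX$ is finite, the TV ball is compact, and $H_\infty(P)=-\log_q\max_x P(x)$ is continuous on the simplex; the unsmoothed LHL applies to $Y$ since its hypothesis is exactly $m\le H_\infty(Y)-\log_q(1/\epsilon)$; and the transfer step is sound because $d_{TV}(P_S\otimes P_X,\,P_S\otimes P_Y)=d_{TV}(P_X,P_Y)\le\eta$ and $(s,x)\mapsto(h(s,x),s)$ is a deterministic post-processing, under which $d_{TV}$ is monotone. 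With the paper's conventions (smoothing over \emph{normalized} distributions, $d_{TV}(P,Q)=\max_A|P(A)-Q(A)|$) you in fact obtain the sharper bound $\eta+\tfrac{\sqrt\epsilon}{2}$, which implies the stated $2\eta+\tfrac{\sqrt\epsilon}{2}$; you correctly diagnose that the extra factor in Renner's constant stems from his sub-normalized smoothing framework (and, in the quantum setting, purified/trace-distance bookkeeping), not from any gap in the normalized classical argument. Two cosmetic points only: the $\log$ in the proposition's hypothesis should be read as $\log_q$ to match the rest of the paper, and strictly speaking your appeal to the first Proposition requires $S$ uniform on $\cS$, which is how the paper models the seed throughout, so nothing is lost.
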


In summary, the earlier results used the total variation distance or low-order \Renyi divergences 
to measure the uniformity of the hashed source output. In this work we extend this suite of results
to rely on stronger uniformity measures $D_\alpha, \alpha\ge 2.$

\section{\texorpdfstring{$k$}--universality and uniformity guarantees}
In this section, we establish uniformity guarantees for $k^*$-universal hash functions. We prove that for any source $X$ and a $k^*$-universal hash function $h$, it is possible to extract almost $H_\alpha(X)$ random bits for any $\alpha \in (1, k]$. 
The proof comprises two stages, of which the first handles the case of integer $\alpha$'s and the second ``fills the gaps''.
We begin with the integer case, which also allows us to state the result in a compact form.

\begin{theorem}\label{thm: main 0}
    Let $\epsilon >0 $ and let $k \in \{2,3\dots\}$ and $\alpha \in \{2,3,\dots,k\}$. Let $X$ be a random variable defined on $\cX$ and let $S\sim \cS$ be a random variable independent of $X$. If 
     $$
     m \leq H_{\alpha}(X) -  \log_q\Big(\frac{\alpha^2}{2\epsilon(\alpha-1)\ln q}\Big),
     $$
     then for a $k^*$-universal hash function $h : \cS \times \cX \to \zz_q^m$,
    \begin{align*}
        D_\alpha(P_{h(S;X),S}\|P_{U_m}P_S) \leq \epsilon.
    \end{align*}
\end{theorem}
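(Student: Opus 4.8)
The plan is to reduce the divergence bound to an estimate on a single moment and then expand that moment combinatorially using $k^*$-universality. Since $P_{U_m}P_S$ is strictly positive on $\zz_q^m\times\cS$ (so $P_{h(S;X),S}\ll P_{U_m}P_S$ holds automatically), write $g_s(y):=\sum_{x:\,h(s,x)=y}P_X(x)=\Pr_X(h(s,X)=y)$, so that $P_{h(S;X),S}(y,s)=g_s(y)/|\cS|$. Substituting this and $P_{U_m}P_S(y,s)=q^{-m}/|\cS|$ into the definition of $\alpha$-divergence gives
\[
D_\alpha(P_{h(S;X),S}\|P_{U_m}P_S)=\frac{1}{\alpha-1}\log_q \Delta_\alpha,\qquad \Delta_\alpha:=q^{m(\alpha-1)}\,\ee_S\Big[\sum_{y}g_S(y)^\alpha\Big].
\]
Thus it suffices to show $\Delta_\alpha\le q^{\epsilon(\alpha-1)}$. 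This is the form I would work in, because $\Delta_\alpha$ is exactly the $\alpha$-th collision moment that the universality hypothesis is built to control.

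First I would expand the integer power. For integer $\alpha$, $\sum_y g_s(y)^\alpha=\sum_{(x_1,\dots,x_\alpha):\,h(s,x_1)=\dots=h(s,x_\alpha)}\prod_{i=1}^\alpha P_X(x_i)$, and taking $\ee_S$ moves the seed-average onto the collision event. Grouping the $\alpha$-tuples by the set partition $\pi$ of $\{1,\dots,\alpha\}$ that records which coordinates carry equal inputs, a tuple with $r=r(\pi)$ distinct values triggers a collision of $r$ distinct inputs, which by $r$-universality (available since $2\le r\le\alpha\le k$ and $h$ is $k^*$-universal, the case $r=1$ being trivial) has seed-probability at most $q^{-m(r-1)}$. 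Bounding the sum of $\prod_i P_X(x_i)$ over tuples realizing a fixed $\pi$ with block sizes $b_1,\dots,b_r$ by $\prod_j C_{b_j}$, where $C_b:=\sum_x P_X(x)^b=q^{-(b-1)H_b(X)}$ (dropping the distinctness of the block-values), yields
\[
\Delta_\alpha\le \sum_{\pi}q^{m(\alpha-r(\pi))}\prod_{j}C_{b_j}=\sum_{\pi}\ \prod_{j:\,b_j\ge2}q^{(b_j-1)(m-H_{b_j}(X))}.
\]

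The next step converts this into a clean scalar bound. Since $H_t(X)$ is non-increasing in the order $t$ and every block satisfies $b_j\le\alpha$, we have $m-H_{b_j}(X)\le m-H_\alpha(X)$; writing $\tau:=q^{m-H_\alpha(X)}$, each partition into $r$ blocks then contributes at most $\tau^{\alpha-r}$, so $\Delta_\alpha\le\sum_{r=1}^{\alpha}\stirlingII{\alpha}{r}\tau^{\alpha-r}$. I would then invoke the Stirling-number estimate $\stirlingII{\alpha}{\alpha-j}\le\binom{\alpha}{2}^{\,j}/j!$, obtained by mapping each partition of excess $j$ to a spanning forest with $j$ edges (of which there are at most $\binom{\binom{\alpha}{2}}{j}\le\binom{\alpha}{2}^j/j!$), to sum the series as an exponential:
\[
\Delta_\alpha\le\sum_{j\ge0}\frac{1}{j!}\Big(\tbinom{\alpha}{2}\tau\Big)^{j}=\exp\!\Big(\tbinom{\alpha}{2}\tau\Big).
\]
Finally the hypothesis on $m$ gives $\tau\le \tfrac{2\epsilon(\alpha-1)\ln q}{\alpha^2}$, hence $\binom{\alpha}{2}\tau\le\frac{(\alpha-1)^2}{\alpha}\epsilon\ln q\le(\alpha-1)\epsilon\ln q$, so $\Delta_\alpha\le q^{\epsilon(\alpha-1)}$ and $D_\alpha\le\epsilon$.

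I expect the main obstacle to be the combinatorial bookkeeping in the middle step: organizing the $\alpha$-tuple expansion by set partitions, verifying that the per-partition bound depends only on the number of blocks (so that the specific block-size profile washes out into $\tau^{\alpha-r}$), and establishing the Stirling-number inequality cleanly enough that the series collapses to an exponential. Everything else—the opening reduction and the final substitution of the entropy hypothesis—is routine once $\Delta_\alpha$ is put in the moment form above.
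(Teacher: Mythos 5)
Your proof is correct and takes essentially the same route as the paper: you expand the integer moment $q^{(\alpha-1)D_\alpha}$ over $\alpha$-tuples, organize the tuples by set partitions of $\{1,\dots,\alpha\}$, apply $l$-universality to each collision pattern, and collapse the resulting Stirling sum $\sum_{r}\stirlingII{\alpha}{r}\tau^{\alpha-r}$ into an exponential, exactly as the paper does via Theorem~\ref{thm: main1} followed by the Poisson-moment bound \eqref{eq: pms}. Your only deviations are harmless simplifications available in the integer case: you bound each block sum $\sum_x P_X(x)^{b_j}=q^{-(b_j-1)H_{b_j}(X)}$ directly by monotonicity of $H_b$ in $b$, where the paper routes through H\"older's inequality (needed there only to handle non-integer $\alpha$ later), and your forest-counting estimate $\stirlingII{\alpha}{\alpha-j}\le \binom{\alpha}{2}^{j}/j!$ gives a self-contained, marginally sharper exponential $\exp\big(\binom{\alpha}{2}\tau\big)$ in place of the cited $\exp(\alpha^2\tau/2)$.
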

This theorem is a slight relaxation of Theorem \ref{thm: main1} which we present below. Starting with this version enables us to align the statement with the format of classic LHL statements in Section \ref{sec: hash}, and also uses a compact form of the inequality
for $D_\alpha$. 

Before stating the theorem, we remind the reader the definition fo Stirling numbers. The Stirling number of the second kind, denoted $\stirlingII kl$, equals the number of ways to partition a $k$-set into $l$ parts (see, e.g., \cite[Ch.~5]{comtet2012advanced}). Stirling numbers can also be
defined via their generating function
    $$
    z^k=\sum_{l=1}^k \stirlingII kl z(z-1)\dots(z-l+1).
    $$

\begin{theorem}\label{thm: main1}
     Let $k \in \{2,3\dots\}$. Let $X$ be a random variable defined on $\cX$ and let $S\sim \cS$ be a random variable independent of $X$. Let $h : \cS \times \cX \to \zz_q^m$ be $k^*$-universal. Then
    \begin{align}\label{eq: main1}
        q^{(k-1)D_k(P_{h(S;X),S}\|P_{U_m}P_S)} \leq \sum_{l=1}^{k}\stirlingII{k}{l}q^{(k-l)(m-H_{k}(X))}.
    \end{align}
    Moreover, \eqref{eq: main1} also holds if all instances of $k$ in it are replaced with any integer $\alpha$
    between 2 and $k$.
\end{theorem}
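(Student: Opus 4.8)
The plan is to evaluate $q^{(k-1)D_k(P_{h(S;X),S}\|P_{U_m}P_S)}$ exactly in terms of the collision probabilities of $h$, and then bound it using $k^*$-universality together with the monotonicity of \Renyi entropy. First I would unfold the definition of $D_k$. Writing $Y=h(S,X)$ and using that $P_{U_m}P_S$ is uniform on $\zz_q^m\times\cS$ (so it assigns mass $q^{-m}/|\cS|$ to each pair), a direct computation from $q^{(k-1)D_k(P\|Q)}=\sum P^k Q^{-(k-1)}$ gives
\[
q^{(k-1)D_k(P_{h(S;X),S}\|P_{U_m}P_S)}
= q^{m(k-1)}\sum_{(x_1,\dots,x_k)\in\cX^k}\Big(\prod_{i=1}^k P_X(x_i)\Big)\Pr_{S}\big(h(S,x_1)=\dots=h(S,x_k)\big).
\]
This step is bookkeeping only: expand the $k$-th power of $P_{Y,S}(y,s)=|\cS|^{-1}\sum_{x:\,h(s,x)=y}P_X(x)$, sum over $y$ to collapse the $k$ copies into the event that all hash values coincide, and then swap the order of summation over $s$ and over the tuple.

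The second step is to organize the tuple sum according to the set partition of $\{1,\dots,k\}$ induced by the equalities among the coordinates $x_1,\dots,x_k$. A tuple inducing a partition with exactly $l$ blocks has exactly $l$ distinct coordinate values, so the event $h(S,x_1)=\dots=h(S,x_k)$ is an $l$-fold collision; by $l$-universality (available for every $2\le l\le k$ precisely because $h$ is $k^*$-universal) its probability is at most $q^{-m(l-1)}$. Collecting the contributions of all partitions with $l$ blocks and dropping the distinctness constraint on the per-block values (which only enlarges the sum) yields
\[
q^{(k-1)D_k(P_{h(S;X),S}\|P_{U_m}P_S)}\le \sum_{l=1}^k q^{m(k-l)}\sum_{\pi:\,|\pi|=l}\ \prod_{r=1}^{l} S_{b_r(\pi)},\qquad S_b:=\sum_x P_X(x)^b,
\]
where $b_1(\pi),\dots,b_l(\pi)$ are the sizes of the blocks of the partition $\pi$.

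The crux, which I expect to be the main obstacle, is the remaining inequality $\prod_{r=1}^l S_{b_r(\pi)}\le S_k^{(k-l)/(k-1)}$ for every block-size profile with $\sum_r b_r=k$. I would reduce it to the monotonicity of \Renyi entropy. Since $\log_q S_b=(1-b)H_b(X)$ and $S_1=1$, taking base-$q$ logarithms turns the claim into $\sum_r (b_r-1)H_{b_r}(X)\ge (k-l)H_k(X)$; as $\sum_r(b_r-1)=k-l$ this is exactly $\sum_r (b_r-1)\big(H_{b_r}(X)-H_k(X)\big)\ge 0$, and each summand is nonnegative because $b_r\le k$ forces $H_{b_r}(X)\ge H_k(X)$ (\Renyi entropy is nonincreasing in its order) while $b_r-1\ge 0$. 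Substituting this bound, using that the number of set partitions of a $k$-set into $l$ blocks is exactly $\stirlingII kl$, and rewriting $S_k^{(k-l)/(k-1)}=q^{-(k-l)H_k(X)}$, collapses the right-hand side to $\sum_{l=1}^k \stirlingII kl q^{(k-l)(m-H_k(X))}$, which is \eqref{eq: main1}. Finally, the ``moreover'' claim is immediate: a $k^*$-universal $h$ is also $\alpha^*$-universal for every integer $2\le\alpha\le k$, so the entire argument applies verbatim with $\alpha$ in place of $k$.
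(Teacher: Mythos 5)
Your proof is correct, and its skeleton coincides with the paper's: the exact identity
\begin{equation*}
q^{(k-1)D_k(P_{h(S;X),S}\|P_{U_m}P_S)}=q^{m(k-1)}\sum_{x^k\in\cX^k}P_{X^k}(x^k)\,\Pr_S\bigl(h(S,x_1)=\dots=h(S,x_k)\bigr)
\end{equation*}
is the paper's \eqref{eq: u11}, and the subsequent reorganization by the set partition of $\{1,\dots,k\}$ induced by coordinate equalities, the application of $l$-universality per partition class, the dropping of distinctness constraints, and the $\stirlingII{k}{l}$ count are all the same, as is your one-line disposal of the ``moreover'' clause. The genuine difference is how you handle what you rightly identify as the crux, $\prod_{r}S_{b_r}\le q^{-(k-l)H_k(X)}$ with $S_b=\sum_x P_X(x)^b$: you rewrite $S_b=q^{(1-b)H_b(X)}$ and conclude termwise from monotonicity of the R\'enyi entropy in its order ($H_{b_r}(X)\ge H_k(X)$ since $b_r\le k$, the blocks with $b_r=1$ contributing nothing because their exponent $b_r-1$ vanishes, which also covers the edge cases $l=k$ and $l=1$ with equality). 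The paper instead applies H\"older's inequality blockwise, $\sum_x P_X(x)^{p_j}\le\bigl(\sum_x P_X(x)^{k-l+1}\bigr)^{(p_j-1)/(k-l)}$, obtaining the sharper per-partition estimate $\sum_{x^k\in T_\cP}P_{X^k}(x^k)\le q^{-(k-l)H_{k-l+1}(X)}$, and only afterwards relaxes $H_{k-l+1}$ to $H_k$ by the same monotonicity you invoke. Your route is shorter and more elementary, and it suffices exactly because the theorem's right-hand side is stated with $H_k$ throughout; what the H\"older interpolation buys the paper is the stronger intermediate bound and, more importantly, a technique that carries over to the non-integer case (Theorem \ref{thm: main2}, proved in Appendix \ref{Appendix A}), where fractional block exponents $q_j(i)=p_j+\alpha-k+1$ arise and the same interpolation step \eqref{eq: Holder3} is reused. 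For the integer statement at hand, the two arguments deliver identical conclusions.
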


Before presenting the proof, we need to introduce some notation. We abbreviate a $k$-tuple $x_1, \dots, x_k \in \cX$ as $x^k.$ In the proof, we sum a particular quantity over all $x^k \in \cX^k$. 
To simplify this summation, we first partition $\cX^k$ into specific blocks and then split the sum into a double sum, first within each block and then across the blocks. 
To construct this partition $T$ on $\cX^k$, we will use partitions of an auxiliary set, $\{1, 2, \dots, k\}$, which also give rise to the Stirling numbers in the final answer.

Let $\mathfrak{P}_k$ and $\mathfrak P_k(l)$ denote the set of all partitions and the set of all $l$-partitions (partitions into $l$ blocks) of $\{1, 2, \dots, k\}$, respectively. We write an $l$-partition $\cP \in \mathfrak P_k(l)$ as $\cP = \{\cP_1,\dots,\cP_l\}$. 
Next, we construct a partition $T$ of $\cX^k$, where the blocks are indexed by the elements of $\mathfrak{P}$, i.e., $T = \{T_\cP\}_{\cP \in \mathfrak{P}_k}$.
The rule for assigning elements of $\cX^k$ to the blocks of $T$ is as follows: an element $x^k \in T_\cP$ if and only if, for all $i, j \in \{1,2,\dots,k\}$ within the same block of $\cP$, we have $x_i = x_j$, and for any $i, j$ in different blocks, $x_i \neq x_j$.

As a simple example to clarify our notation, let $\cX = \{0,1\}$ and $k=2$. There are 2 different partitions of the $2$-set, namely:
\begin{align*}
    \mathfrak{P}_2 &= \{\; \{\{1,2\}\},\{\{1\},\{2\}\}\;\}. 
\end{align*}
The corresponding blocks of $T$ are
\begin{align*}
        T_{\{\{1,2\}\}} = \{(0,0), (1,1)\} , \quad 
    T_{\{\{1\},\{2\}\}} = \{(0,1), (1,0)\}. 
\end{align*}
With this notation in place, we now proceed to the proof.
\begin{proof} (of Theorem \ref{thm: main1})
The expression on the left in \eqref{eq: main1} is simply the expectation
\begin{equation}\label{eq: DE}
q^{(k-1)D_k(P_{h(S;X),S}\|P_{U_m}P_S)}=E_{P_{U_m}P_S} \Big[\frac{P_{h(S;X),S}(\cdot,\cdot)}{P_{U_m}(\cdot)P_S(\cdot)}\Big]^{k-1}.
\end{equation}
Accordingly, we compute
\begin{align}
        & q^{(k-1)D_k(P_{h(S;X),S}\|P_{U_m}P_S)}  =  \sum_{u\in \zz_q^m,s\in\cS}\frac{P_{h(S;X),S}(u,s)^{k}}{P_{U_m}(u)^{k-1}P_S(s)^{k-1}}\nonumber\\
               &\;= q^{m(k-1)}\sum_{s}P_S(s)\sum_{u}P_{h(S;X)|S}(u|s)^{k}\nonumber\\
        &\;= q^{m(k-1)}\sum_{s}P_S(s)\nonumber\\
        &\hspace*{.5in}\times\sum_{u}\prod_{i=1}^k\Big[\sum_{x_i \in \cX}P_{h(S;X)|S,X}(u|s,x_i)P_X(x_i)\Big] \nonumber\\ 
        &\;= q^{m(k-1)}\sum_{s}P_S(s)\nonumber\\
        &\hspace*{.5in}\times\sum_{u}\prod_{i=1}^k\Big[\sum_{x_i \in \cX}\1\{h(s,x_i) = u\}P_X(x_i)\Big].
        \label{eq: u0}
\end{align}
For typographical purposes below we write $P_{X^k}(x^k):=P_X(x_1)\dots P_X(x_{k})$. Continuing from \eqref{eq: u0}
      \begin{align}
        &\;= q^{m(k-1)} \sum_{u} \sum_{x^k \in \cX^k} {P_{X^k}(x^k)} \nonumber\\
        &\hspace*{.5in}\times \sum_{s}P_S(s) 
        \prod_{i=1}^k \1\{h(s,x_i)=u\}
        \remove{\1\{h(s,x_1) = u\}  \dots    \1\{h(s,x_k) = u\}}\label{eq: u1}\\
        &\;= q^{m(k-1)} \sum_{u} \sum_{x^k \in \cX^k} P_{X^k}(x^k)\nonumber\\
        &\hspace*{.5in}\times \Pr_S\big(h(S,x_1) = \dots 
        = h(S,x_k) = u \big) \nonumber\\
        &\;= q^{m(k-1)} \sum_{x^k \in \cX^k} P_{X^k}(x^k) \Pr_S\big(h(S,x_1) = \dots 
        = h(S,x_k) \big) \label{eq: u11}.
    \end{align}
Let $\eta(x^k)$ be the number of distinct entries in $x^k$. From $k^*$-universality, we have
\begin{align*}
    \Pr_S\big(h(S,x_1) =  \dots = h(S,x_k) \big) \leq 
    q^{-m((\eta(x^k)-1)}.
\end{align*}
Continuing the calculation,
\begin{align}
    &q^{(k-1)D_k(P_{h(S;X),S}\|P_{U_m}P_S)}\nonumber\\
    &\;\leq q^{m(k-1)} \sum_{x^k \in \cX^k} P_{X^k}(x^k) q^{-m (\eta(x^k)-1)} \nonumber\\
    &\;= q^{m(k-1)} \sum_{l = 1} ^k \sum_{\cP \in \mathfrak P_k(l)}\sum_{x^k \in T_\cP} P_{X^k}(x^k) q^{-m (\eta(x^k)-1)} \nonumber\\
    &\;= q^{m(k-1)} \sum_{l = 1} ^k q^{-m (l-1)} \sum_{\cP \in \mathfrak P_k(l)}\sum_{x^k \in T_\cP} P_{X^k}(x^k)  \label{eq: u2}.
\end{align}
Note the transition form the number of distinct entries to partitions into $l$ parts in \eqref{eq: u2}.
Let us fix an $l$-partition $\cP = \{\cP_1,\dots,\cP_l\}$ and set $p_i := |\cP_i|$. Denote a generic element of
the block $\cP_i$ by $\pi_i$. We now estimate the innermost sum in \eqref{eq: u2}:
\begin{align}\label{eq: u22}
    \sum_{x^k \in T_\cP} P_{X^k}(x^k)
     & = \sum_{x^k \in T_\cP} \prod_{j=1}^l P_X(x_{\pi_j})^{p_j}  
    \leq \prod_{j=1}^l \sum_{x \in \cX} P_X(x)^{p_j},
\end{align}
where the inequality is obtained by removing the requirement that the variables in different blocks must be distinct. 

Let us fix $j$ and evaluate the sum on $x$ above. If $p_j = 1$, evidently, the sum is equal to $1$. In particular, if  $ l=k,$  i.e. $p_j = 1$ for all $j$, we have 
\begin{align}\label{eq: k=l}
    \sum_{x^k \in T_\cP} P_{X^k}(x^k) \leq 1.
\end{align}
If $p_j > 1$, we may write
\begin{align*}
    \sum_{x \in \cX}P_X(x)^{p_j} = \sum_{x \in \cX}\big(P_X(x)^{\frac{k-l+1-p_j}{k-l}}\big)
      \big(P_X(x)^{k-l+1}\big)^{\frac{p_j-1}{k-l}}.
\end{align*}
Since $p_j >1$, $l<k$, so all the quantities on the right-hand side are well defined.
Now let us use H\"older's inequality $\|fg\|_1\le \|f\|_\lambda\|g\|_\mu$ with $f$ and $g$ given by the terms in 
the parentheses and with the exponents $\lambda=\frac{k-l}{k-l+1-p_j}$ and $\mu=\frac{k-l}{p_j-1}$. We obtain
\begin{align}\label{eq: Holder}
    \sum_{x \in \cX}P_X(x)^{p_j} &{\leq} \big(\sum_{x \in \cX}P_X(x)\big)^{\frac{k-l+1-p_j}{k-l}} \big(\!\sum_{x \in \cX}P_X(x)^{k-l+1}\big)^{\frac{p_j-1}{k-l}}\nonumber\\
    & = \big(\sum_{x \in \cX}P_X(x)^{k-l+1}\big)^{\frac{p_j-1}{k-l}}.
\end{align}
Therefore, if $l <k$, then irrespective of the value of $p_j$ we have the estimate
\begin{align}\label{eq: Holder2}
    \sum_{x \in \cX}P_X(x)^{p_j} \leq \big(\sum_{x \in \cX}P_X(x)^{k-l+1}\big)^{\frac{p_j-1}{k-l}}.
\end{align}
Returning to \eqref{eq: u22}, for all $l<k$ we obtain 
\begin{align}
    \sum_{x^k \in T_\cP} P_{X^k}(x^k) 
    & \leq \prod_{j=1}^l \sum_{x \in \cX} P_X(x)^{p_i} \nonumber \\
    & \leq \prod_{j=1}^l \Big[\sum_{x \in \cX}P_X(x)^{k-l+1}\Big]^{\frac{p_j-1}{k-l}} \nonumber \\
    &= \sum_{x \in \cX}P_X(x)^{k-l+1} \nonumber \\
    &= q^{-(k -l)H_{k-l+1}(X)}. \label{eq: prob bound}
\end{align}
From \eqref{eq: k=l} it can be easily seen that the inequality $\sum_{x^k \in T_\cP} P_{X^k}(x^k)\le q^{-(k -l)H_{k-l+1}(X)}$ holds also for $k=l$. 
Now let us substitute these results into \eqref{eq: u2}. Recalling that $\stirlingII{k}{l}$ counts the number of partitions into
$l$ blocks, we can write
    \begin{align*}
        & q^{(k-1)D_k(P_{h(S;X),S}\|P_{U_m}P_S)} \\
        & \qquad \leq q^{m(k-1)}\sum_{l=1}^k\stirlingII{k}{l} q^{-m(l-1)}q^{-(k-l)H_{k-l+1}(X)}\\
        & \qquad = \sum_{l=1}^k\stirlingII{k}{l} q^{(k-l)(m-H_{k-l+1}(X))}\\
        & \qquad \leq \sum_{l=1}^k\stirlingII{k}{l} q^{(k-l)(m-H_{k}(X))},
    \end{align*}
where the final inequality holds due to the monotonicity property of \Renyi entropy.
    
The final claim follows from the fact that if $h$ is $k^*$-universal, it is also $\alpha^*$-universal for the integer
$\alpha$ between $2$ and $k$.
\end{proof}
\begin{remark} 
We note that asymptotic estimates of the quantity measuring closeness to uniformity in terms of
\Renyi entropy of various orders $\alpha\in (1,\infty]$ were previously considered in \cite{kavian2023output}. In particular, its authors also start with an expression equivalent to our Eq.~\eqref{eq: DE}. At the same time, the proof method in \cite{kavian2023output} utilizes the independence of hash inputs, whereas we adopt a more general setting of $k$-universal hash functions. Further, the main results in \cite{kavian2023output} are concerned with establishing the asymptotic behavior for the expectation on the right-hand side of \eqref{eq: DE}; see Theorems 1 and 4 in that work, whose authors did not pursue finite-length bounds of the form obtained in this paper. 

\end{remark}

Observe that the right-hand side of \eqref{eq: main1} corresponds to the $k$-th moment of a Poisson random variable,  normalized by its mean.  Recall that the $k$-th moment of a Poisson random variable $Z$ with parameter $\lambda$ is given by 
\begin{align*}
    \ee[Z^k] = \sum_{l=1}^k\stirlingII{k}{l}\lambda^l.
\end{align*}
By setting $\lambda = q^{H_k(X)-m}$, we observe that the right-hand side of \eqref{eq: main1} simplifies to $\ee[(Z/\lambda)^k]$
for $Z\sim\text{Poi}(\lambda)$. This insight allows us to leverage standard bounds on Poisson moments, to derive simpler bounds for $q^{(k-1)D_k(P_{h(S,X),S} \| P_{U_m}P_S)}$. A simple and well-known bound for Poisson moments is as follows: 
\begin{align}\label{eq: pms}
    \ee[(Z/\lambda)^k] \leq \exp\Big(\frac{k^2}{2\lambda}\Big).
\end{align}

Using this bound, let us prove Theorem \ref{thm: main 0}
\begin{proof} (of Theorem \ref{thm: main 0})
Evidently,
  \begin{align}\label{eq: q-D_k bound 2}
    q^{(k-1)D_k(P_{h(S;X),S}\|P_{U_m}P_S)} \leq \exp\Big(\frac{k^2}{2q^{H_k(X)-m}}\Big),
\end{align}
which implies 
\begin{align}\label{eq: D_k bound 2}
    D_k(P_{h(S;X),S}\|P_{U_m}P_S) \leq \frac{k^2}{2 q^{H_k(X)-m}(k-1)\ln q}.
\end{align}

If we set $m \leq H_k(X) - \log_q\Big(\frac{k^2}{2\epsilon(k-1)\ln q}\Big)$, then we have $D_k(P_{h(S;X),S}\|P_{U_m}P_S) \leq \epsilon$. This addresses the case $\alpha = k.$ Using the last claim of Theorem~\ref{thm: main1}, we can extend this argument to apply to all $\alpha \in \{2,\dots,k\}$. 
\end{proof}

\begin{remark}\label{rmk: gamma}
Of course, \eqref{eq: pms} is not the best possible estimate of the moments, and tighter results are available.
For instance, using Theorem 1 of \cite{ahle2022sharp}, we obtain the bound
\begin{multline}\label{eq: D_k bound}
    D_k(P_{h(S;X),S}\|P_{U_m}P_S) \\\leq \frac{k}{k-1}\log_q\Big(\frac{kq^{m-H_k(X)}}{\ln(kq^{m-H_k(X)}+1)}\Big).
\end{multline}
With this we can strengthen Theorem \ref{thm: main 0}, claiming that its conclusion holds under a more forgiving assumption: $m \leq H_{k}(X) + \log_q\Big(\frac{\gamma(q^{\epsilon^\frac{k}{k-1}})}{k}\Big)$, where $\gamma(y)$ is the unique solution $x$ to the equation $\frac{x}{\ln(x+1)} = y, y \geq 1$. Since $m$ is now allowed to take larger values, this supports extracting more nearly uniform bits from the source, which accounts for the stronger outcome.
\end{remark}

Our next task is to move from integer $\alpha$'s to all real values $1<\alpha\le k$, generalizing Theorem \ref{thm: main1}.

\begin{theorem}\label{thm: main2}
     Let $k \in \{2,3\dots\}$ and $\alpha \in (1,k]$. Let $X$ be a random variable defined on $\cX$ and let $S$ be a uniform random variable that is independent of $X$. Let $h : \cS \times \cX \to \zz_q^m$ be $k^*$-universal. Then
    \begin{align}\label{eq:continuous}
        & q^{(\alpha-1)D_\alpha(P_{h(S;X),S}\|P_{U_m}P_S)}\nonumber\\ 
        & \qquad \leq  \sum_{l=1}^{\lceil \alpha \rceil-1} l \stirlingII{\lceil \alpha \rceil-1}{l} q^{(\alpha-l)(m-H_{\alpha}(X)} \nonumber\\
        & \qquad \quad \quad + \sum_{l=1}^{\lceil \alpha \rceil}\stirlingII{\lceil \alpha\rceil-1}{l-1} q^{(\lceil \alpha \rceil-l)(m-H_{ \alpha }(X)}. 
    \end{align}
\end{theorem}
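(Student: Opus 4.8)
The plan is to extend Theorem~\ref{thm: main1} from integer $\alpha$ to all real $\alpha \in (1,k]$ by retracing the proof of Theorem~\ref{thm: main1} but with a real-valued exponent, and to use a H\"older interpolation to handle the non-integer part. Starting from the identity \eqref{eq: DE}, I would write $q^{(\alpha-1)D_\alpha(P_{h(S;X),S}\|P_{U_m}P_S)}$ as the expectation $E_{P_{U_m}P_S}\big[P_{h(S;X),S}/(P_{U_m}P_S)\big]^{\alpha-1}$. The difficulty is that for non-integer $\alpha$ the clean expansion into a product over $k$ copies in \eqref{eq: u0}--\eqref{eq: u11} no longer applies, since $P_{h(S;X)|S}(u|s)^\alpha$ does not factor into a product of $\lceil\alpha\rceil$ or $\alpha$ marginals. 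The natural fix is to split $\alpha = (\lceil\alpha\rceil-1) + (\alpha - \lceil\alpha\rceil + 1)$, treating $P^\alpha = P^{\lceil\alpha\rceil-1}\cdot P^{\alpha-\lceil\alpha\rceil+1}$, and then apply H\"older's inequality to peel off the fractional piece while keeping an integer number of honest factors.

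\emph{Key steps, in order.} First I would reduce to bounding $\sum_s P_S(s)\sum_u P_{h(S;X)|S}(u|s)^\alpha$, exactly as in \eqref{eq: u0}, so that the remaining task is to control $\sum_u P(u)^\alpha$ where $P(u):=P_{h(S;X)|S}(u|s)$ is a probability vector on $\zz_q^m$. Second, I would write $\alpha$ as a sum of an integer part $\lceil\alpha\rceil-1$ and a fractional part $\beta := \alpha-\lceil\alpha\rceil+1 \in (0,1]$, so that $\sum_u P(u)^\alpha = \sum_u P(u)^{\lceil\alpha\rceil-1}\,P(u)^{\beta}$. Expanding the integer power $P(u)^{\lceil\alpha\rceil-1}$ as a product over $\lceil\alpha\rceil-1$ copies of the marginalization identity reproduces the combinatorial structure of \eqref{eq: u1}: after marginalizing and applying $k^*$-universality term by term (here using $l$-universality for $l$ up to $\lceil\alpha\rceil\le k$), the sum over inputs decomposes according to partitions of $\{1,\dots,\lceil\alpha\rceil-1\}$, generating the Stirling numbers $\stirlingII{\lceil\alpha\rceil-1}{l}$ that appear in \eqref{eq:continuous}. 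The fractional factor $P(u)^\beta$ must be carried along through this decomposition, and it is what splits the answer into the two sums on the right of \eqref{eq:continuous}: one family of terms where the fractional weight attaches to an already-occupied block (giving the factor $l$ and the exponent $\alpha-l$), and one where it creates or sits in a singleton block (giving the shifted Stirling index $l-1$ and the exponent $\lceil\alpha\rceil-l$).

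The third step is the analytic core: after fixing a partition $\cP$ with blocks of sizes $p_1,\dots,p_l$, I would bound the inner sum $\sum_{x^{\lceil\alpha\rceil-1}\in T_\cP} P_{X^{\cdots}}(x^{\cdots})$ together with the residual fractional power using H\"older's inequality, in the same spirit as \eqref{eq: u22}--\eqref{eq: prob bound}, but now choosing the H\"older exponents so that every block power $P_X(x)^{p_j}$ and the fractional piece get reassembled into the single quantity $\sum_x P_X(x)^\alpha = q^{-(\alpha-1)H_\alpha(X)}$. This is the step that replaces $H_{k-l+1}(X)$ in \eqref{eq: prob bound} with $H_\alpha(X)$ throughout, and it is where the exponents $(\alpha-l)$ and $(\lceil\alpha\rceil-l)$ on $(m-H_\alpha(X))$ in \eqref{eq:continuous} are pinned down. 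Collecting the two contributions over all $l$-partitions, multiplying back the prefactor $q^{m(\alpha-1)}$ and the collision bounds $q^{-m(\eta-1)}$, and converting partition counts into Stirling numbers yields \eqref{eq:continuous}.

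The main obstacle I anticipate is the bookkeeping in the H\"older step: choosing the interpolation exponents so that the fractional weight $\beta$ distributes correctly across the blocks and so that every $\sum_x P_X(x)^{p_j}$-type factor, after the inequality, collapses to a single copy of $\sum_x P_X(x)^\alpha$ rather than to mixed \Renyi orders. One must verify that the resulting exponents are positive (so H\"older applies in the intended direction) across the full range $\alpha\in(1,k]$, and carefully track whether the fractional block is a singleton or not, since the two cases are exactly what produce the two separate sums in the bound. Once the exponent arithmetic is arranged so that $\lceil\alpha\rceil-1$ integer factors plus the single fractional factor reconstitute $\alpha$, the rest is the same partition-counting argument that proved Theorem~\ref{thm: main1}, and Theorem~\ref{thm: main 0} for integer $\alpha$ is recovered as the special case $\beta=1$.
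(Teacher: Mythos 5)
Your proposal is correct and follows essentially the same route as the paper's proof in Appendix~\ref{Appendix A}: after reducing to $\alpha\in(k-1,k]$ (so that $\lceil\alpha\rceil=k$, matching your decomposition), the paper likewise writes $P^\alpha$ as $k-1$ integer marginalization factors times one factor with fractional exponent $\alpha-k+1\in(0,1]$, splits according to whether the fractional element $x_k$ lies in $\{x_1,\dots,x_{k-1}\}$ or not---producing exactly your two sums, the factor $l$, and the shifted Stirling index---and reassembles the R\'enyi orders via subadditivity of $z\mapsto z^\beta$, H\"older's inequality, and monotonicity of $H_\alpha$. The only cosmetic difference is that in the new-element case the paper implements your ``H\"older interpolation'' as a Jensen step with normalized weights $W/C_\cP$ (equivalent to the H\"older bound you describe), landing on intermediate orders such as $H_{k-l}(X)$ before monotonicity converts everything to $H_\alpha(X)$.
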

The proof of this theorem is given in Appendix \ref{Appendix A}.

Note that when $\alpha = k$, Theorem~\ref{thm: main2} recovers Theorem \ref{thm: main1} due to the identity $\stirlingII{k}{l}=l \stirlingII{k-1}{l}+\stirlingII{k-1}{l-1}$.

\begin{remark}
  In a number of cases it is possible to further simplify the right-hand side of \eqref{eq:continuous}. For instance, if $m \leq H_\alpha(X)$, we have $q^{m-H_{\alpha}(X)} \leq 1 $. Consequently, by replacing,  $(\lceil\alpha\rceil-l)$ with $(\alpha -l)$ in the exponent of $q$ in the last sum, we obtain:
    \begin{align}\label{eq:continuous2}
        q^{(\alpha-1)D_\alpha(P_{h(S;X),S}\|P_{U_m}P_S)} \leq \sum_{l=1}^{\lceil \alpha \rceil}\stirlingII{\lceil \alpha \rceil}{l} q^{( \alpha -l)(m-H_{\alpha}(X)}. 
    \end{align}   
    On the other hand, if $m > H_\alpha(X)$, a similar argument yields the inequality
    \begin{align}\label{eq:continuous3}
        q^{(\alpha-1)D_\alpha(P_{h(S;X),S}\|P_{U_m}P_S)} \leq \sum_{l=1}^{\lceil \alpha \rceil}\stirlingII{\lceil \alpha \rceil}{l} q^{(\lceil \alpha \rceil-l)(m-H_{\alpha}(X)}. 
    \end{align} 
We can use the moment bounds such as \eqref{eq: pms} to bring this estimate to the form similar to Theorem~\ref{thm: main 0}.
\end{remark}

Theorem \ref{thm: main2} also allows us to estimate the deviation of the distilled bits from uniformity in terms of $\alpha$-divergence, when $\alpha$ is close to $1$.

\begin{corollary}\label{cor: LHL (1,2]}
    Let $\epsilon > 0$ and $\alpha \in (1,2]$. Let $X$ be a random variable defined on $\cX$ and let $S\sim \cS$ be a uniform random variable independent of $X$. Let $h : \cS \times \cX \to \zz_q^m$ be $2$-universal.  If $m \leq H_{\alpha}(X)-\frac{1}{\alpha-1}\log_q (\frac{1}{\epsilon (\alpha-1) \ln q })$, then
    \begin{align}\label{eq: LHL (1,2]}
        D_\alpha(P_{h(S;X),S}\|P_{U_m}P_S) \leq \epsilon.
    \end{align}
\end{corollary}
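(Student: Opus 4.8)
The plan is to specialize Theorem~\ref{thm: main2} to $k=2$; since $2^*$-universality coincides with $2$-universality (as noted after Definition~\ref{def:k*HF}), the hypothesis on $h$ matches exactly what that theorem requires. For $\alpha\in(1,2]$ we have $\lceil\alpha\rceil=2$, so $\lceil\alpha\rceil-1=1$, and the two sums in \eqref{eq:continuous} reduce to very few terms. First I would evaluate the relevant Stirling numbers: in the first sum only $l=1$ survives and contributes $1\cdot\stirlingII{1}{1}q^{(\alpha-1)(m-H_\alpha(X))}=q^{(\alpha-1)(m-H_\alpha(X))}$, while in the second sum the $l=1$ term vanishes because $\stirlingII{1}{0}=0$ and the $l=2$ term equals $\stirlingII{1}{1}q^{0}=1$. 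Hence the right-hand side collapses to the clean expression $1+q^{(\alpha-1)(m-H_\alpha(X))}$, giving
\[
q^{(\alpha-1)D_\alpha(P_{h(S;X),S}\|P_{U_m}P_S)} \leq 1 + q^{(\alpha-1)(m-H_\alpha(X))}.
\]

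Next I would feed in the assumption on $m$. Multiplying $m \leq H_\alpha(X)-\frac{1}{\alpha-1}\log_q\big(\frac{1}{\epsilon(\alpha-1)\ln q}\big)$ through by $(\alpha-1)>0$ turns the right-hand side into $\log_q\big(\epsilon(\alpha-1)\ln q\big)$, and exponentiating base $q$ yields $q^{(\alpha-1)(m-H_\alpha(X))}\leq \epsilon(\alpha-1)\ln q$. Substituting this into the previous display gives $q^{(\alpha-1)D_\alpha}\leq 1+\epsilon(\alpha-1)\ln q$.

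Finally, to convert this into the stated bound $D_\alpha\leq\epsilon$, I would take natural logarithms of both sides and apply the elementary concavity inequality $\ln(1+x)\leq x$ with $x=\epsilon(\alpha-1)\ln q\geq 0$, obtaining $(\alpha-1)D_\alpha\ln q \leq \epsilon(\alpha-1)\ln q$; dividing by $(\alpha-1)\ln q>0$ gives the claim. The argument is essentially a computation built on top of Theorem~\ref{thm: main2}, so I do not anticipate a genuine obstacle. The only two points requiring care are the bookkeeping of the Stirling-number terms that produces the exact form $1+q^{(\alpha-1)(m-H_\alpha(X))}$, and the last step, where the bound $\ln(1+x)\leq x$ is precisely what makes the prescribed threshold on $m$ tight enough to deliver $D_\alpha\leq\epsilon$ rather than a weaker constant.
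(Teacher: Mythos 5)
Your proposal is correct and follows essentially the same route as the paper: specialize Theorem~\ref{thm: main2} with $k=2$ (noting $\lceil\alpha\rceil=2$ and $\stirlingII{1}{0}=0$) to obtain $q^{(\alpha-1)D_\alpha(P_{h(S;X),S}\|P_{U_m}P_S)} \leq 1 + q^{(\alpha-1)(m-H_\alpha(X))}$, then conclude via $\ln(1+x)\leq x$ together with the hypothesis on $m$. The paper merely applies these last two steps in the opposite order (first $\log_q(1+x)\le x/\ln q$, then the bound on $m$), which is an immaterial difference.
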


\begin{proof}
Observe that for $k=2$ there is no difference between universality and $\ast$-universality. Letting $\alpha \in (1,2]$ and applying Theorem \ref{thm: main2} we obtain
    \begin{align}
        q^{(\alpha-1)D_{\alpha}(P_{h(S,X),S}\|P_{U_m}P_S)} \leq  q^{(\alpha-1)(m-H_\alpha(X))} + 1.
    \end{align}
    Therefore,
    \begin{align}
        D_\alpha(P_{h(S,X),S}&\|P_{U_m}P_S)\nonumber\\ &\leq \frac{1}{\alpha-1}\log_q(1+ q^{(\alpha-1)(m-H_\alpha(X))})\nonumber\\ 
        &\leq \frac{q^{(\alpha-1)(m-H_\alpha(X))}}{(\alpha-1)\ln q} \leq \epsilon.
    \end{align}
\end{proof}
Since the $\alpha$-divergence increases with $\alpha,$ inequality \eqref{eq: LHL (1,2]} is still valid if
$D_\alpha(\cdot \|\cdot)$ is replaced with the KL divergence. This recovers the claim of Proposition~\ref{prop: Hayashi}
implied by the results of \cite{hayashi2011exponential}, so our results generalize this work to all $\alpha\in[1,2]$.

\section{Distilling min-entropy}\label{sec: min}
As already mentioned, information-theoretic security results often rely on uniformity guarantees based in min-entropy
 \cite{turan2018recommendation}. 
In this section, we examine how effectively $k^*$-universal hash functions can meet these guarantees. Our results are
stated in terms of the {\em conditional} $\infty$-divergence rather than the more standard one $D_\infty(P_{h(S;X),S} \| P_{U_m} P_S)$. This new divergence measure, defined below, retains the same min-entropy guarantees but relaxes the stringent independence requirements between the seed $S$ and the extracted random bits. To explain the reasoning behind this shift, observe that
the unconditional version of $D_\infty$ accounts for the worst-case deviation from uniformity and for the least favorable seed. However, the $k^*$-universality condition does not account for 
the unfavorable seeds as it does not explicitly constrain
the behavior of joint distributions of $l$ variables with $l\gg k.$ Consequently, we opt for the conditional \Renyi divergence, which instead averages the worst-case scenario over {all seeds}.

This reasoning applies not just to $D_\infty$ but also to other $\alpha$-divergences once $\alpha \gg k$. 
For this reason, we give a more general definition of a {\em conditional \Renyi divergence of order $\alpha$}:
\begin{align*}
    D_\alpha(P_{h(S;X)}\|P_{U_m}|P_S) = \sum_{s\in \cS}P_S(s)D_\alpha(P_{h(S;X)|S}(\cdot| s)\|P_{U_m}).
\end{align*}
Jensen's inequality implies that $D_\alpha(P_{h(S;X)}\|P_{U_m}|P_S) \leq D_\alpha(P_{h(S;X),S}\|P_{U_m}P_S)$.  
Both conditions 
 $$D_\alpha(P_{h(S;X),S}\|P_{U_m}P_S) \leq \epsilon \text{\; and\; }D_\alpha(P_{h(S;X)}\|P_{U_m}|P_S) \leq  \epsilon$$ provide the same uniformity guarantee for $h(S,X)$, namely $H_\alpha(h(S,X)) \geq m -\epsilon $. 
However, $D_\alpha(P_{h(S;X)} \| P_{U_m} | P_S)$ does not penalize the correlation between $h(S,X)$ and $S$ as much as $D_\alpha(P_{h(S;X),S} \| P_{U_m} P_S)$ does.

Relying on the conditional $\alpha$-divergence as a uniformity guarantee, we can prove the following result about
$k^*$-universal hash functions that is applicable when $\alpha > k$.

\begin{proposition}\label{prop: alpha > k}
     Let $k \in \{2,3,\dots\}.$ Let $X$ be a random variable defined on $\cX$ and let $S\sim \cS$ be a uniform random variable that is independent of $X$. Let $h : \cS \times \cX \to \zz_q^m$ be $k^*$-universal. Then for $\alpha \in (k,\infty)$
    \begin{align}\label{eq: alpha > k}  
        & D_\alpha(P_{h(S;X)}\|P_{U_m}|P_S) \nonumber\\ 
        & \quad \leq \frac{\alpha-k}{k(\alpha-1)}m + \frac{\alpha}{\alpha-1}\log_q \Big(\frac{kq^{m-H_k(X)}}{\ln(kq^{m-H_k(X)}+1)}\Big)    
    \end{align}
    and
    \begin{align}\label{eq: alpha = infty}
         D_\infty(P_{h(S;X)}\|P_{U_m}|P_S) &\leq \frac{m}{k} + \log_q\Big(\frac{kq^{m-H_k(X)}}{\ln(kq^{m-H_k(X)}+1)}\Big).
    \end{align}
\end{proposition}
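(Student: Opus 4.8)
The plan is to reduce the conditional divergence to a per-seed computation, relate the order-$\alpha$ quantity to the order-$k$ quantity via monotonicity of $\ell_p$-norms, and then average over seeds, where Jensen's inequality together with the bound on $D_k$ from Remark~\ref{rmk: gamma} closes the argument. First I would fix a seed $s$ and write $Q_s := P_{h(S;X)|S}(\cdot\,|\,s)$ for the distribution of $h(s,X)$ on $\zz_q^m$. Since $P_{U_m}$ is uniform, a direct computation gives
$$D_\alpha(Q_s\|P_{U_m}) = m + \frac{1}{\alpha-1}\log_q\sum_{u\in\zz_q^m} Q_s(u)^\alpha,$$
and the same identity with $\alpha$ replaced by $k$.

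The key step will be the monotonicity of $\ell_p$-norms: for $\alpha \ge k \ge 1$ we have $\|Q_s\|_\alpha \le \|Q_s\|_k$, that is,
$$\sum_{u} Q_s(u)^\alpha \le \Big(\sum_u Q_s(u)^k\Big)^{\alpha/k}.$$
Substituting this into the displayed formula for $D_\alpha(Q_s\|P_{U_m})$ and re-expressing $\log_q\sum_u Q_s(u)^k$ through $D_k(Q_s\|P_{U_m})$ turns the bound into the affine relation
$$D_\alpha(Q_s\|P_{U_m}) \le \frac{\alpha-k}{k(\alpha-1)}\,m + \frac{\alpha(k-1)}{k(\alpha-1)}\,D_k(Q_s\|P_{U_m}).$$
The crucial feature, which I would emphasize, is that the coefficient of $m$ is a constant independent of $s$, while the remaining term is linear in $D_k(Q_s\|P_{U_m})$, so the relation survives averaging.

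Next I would average against $P_S$ and use the definition of the conditional divergence to obtain
$$D_\alpha(P_{h(S;X)}\|P_{U_m}|P_S) \le \frac{\alpha-k}{k(\alpha-1)}\,m + \frac{\alpha(k-1)}{k(\alpha-1)}\,D_k(P_{h(S;X)}\|P_{U_m}|P_S).$$
By Jensen's inequality (as recorded before the statement of Proposition~\ref{prop: alpha > k}) the conditional $D_k$ is dominated by the joint divergence $D_k(P_{h(S;X),S}\|P_{U_m}P_S)$, which is controlled using the sharp Poisson-moment estimate from Remark~\ref{rmk: gamma}, namely
$$D_k(P_{h(S;X),S}\|P_{U_m}P_S) \le \frac{k}{k-1}\log_q\Big(\frac{kq^{m-H_k(X)}}{\ln(kq^{m-H_k(X)}+1)}\Big).$$
Plugging this in, the composite coefficient $\frac{\alpha(k-1)}{k(\alpha-1)}\cdot\frac{k}{k-1}$ collapses to $\frac{\alpha}{\alpha-1}$, which yields \eqref{eq: alpha > k}. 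The bound \eqref{eq: alpha = infty} then follows by letting $\alpha\to\infty$, using $\frac{\alpha-k}{k(\alpha-1)}\to\tfrac1k$ and $\frac{\alpha}{\alpha-1}\to1$; alternatively I would repeat the per-seed step directly with $\|Q_s\|_\infty \le \|Q_s\|_k$, which gives the same constants.

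I expect the only real subtlety to be conceptual rather than computational. Because $D_\alpha$ is increasing in $\alpha$, one cannot bound the order-$\alpha$ divergence by the order-$k$ one directly, and it is the $\ell_p$-monotonicity trick that makes the reduction possible, at the cost of the extra additive term $\frac{\alpha-k}{k(\alpha-1)}m$. Equally important is that the whole argument is carried out for the \emph{conditional} divergence: since $k^*$-universality gives no control over the joint distribution of more than $k$ hash values, it says nothing about individual unfavorable seeds, so averaging over $S$ is precisely what lets Jensen's inequality and the already-established $D_k$ bound take over.
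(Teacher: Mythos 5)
Your proposal is correct and follows essentially the same route as the paper: your per-seed $\ell_p$-norm monotonicity step $\|Q_s\|_\alpha \le \|Q_s\|_k$ is exactly the paper's invocation of the monotonicity of $\frac{\alpha-1}{\alpha}\tilde{H}_\alpha(\cdot|\cdot)$ (the paper states the same identity $D_\alpha(P_{h(S;X)}\|P_{U_m}|P_S) = m - \tilde{H}_\alpha(h(S;X)|S)$ and derives the identical affine relation with coefficients $\frac{\alpha-k}{k(\alpha-1)}$ and $\frac{\alpha(k-1)}{k(\alpha-1)}$). The subsequent steps --- Jensen's inequality to pass to the joint divergence, the bound \eqref{eq: D_k bound} from Remark \ref{rmk: gamma}, and the limit $\alpha\to\infty$ for \eqref{eq: alpha = infty} --- match the paper's proof exactly.
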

\begin{proof}

    First, let us define the following variation of the conditional \Renyi entropy for $\alpha \in (1,\infty)$:
    \begin{align*}
        \Tilde{H}_\alpha(X|Z) & = \frac{1}{1-\alpha} \sum_{z \in \cZ}P_Z(z) \log_q\Big( \sum_{x \in \cX}P_{X|Z}(x|z)^\alpha \Big).
    \end{align*}
 (we prefer not to call it conditional entropy because in the next section we use this term for a different quantity). Now observe that
        \begin{align}
            D_\alpha(P_{h(S;X)}\|P_{U_m}|P_S) &= m - \tilde{H}_\alpha(h(S;X)|S) \label{eq: DC alpha}.
        \end{align}
Since $\frac{\alpha-1}{\alpha}H_\alpha$ is an increasing function of $\alpha,$ so is 
$\frac{\alpha-1}{\alpha}\tilde{H}_\alpha(\cdot|\cdot)$.  
 Together with \eqref{eq: DC alpha} this implies that 
     \begin{align}
        \frac{k-1}{k}(m -& D_k(P_{h(S;X)}\|P_{U_m}|P_S)) \nonumber\\ 
        & \leq \frac{\alpha-1}{\alpha}(m -D_\alpha(P_{h(S;X)}\|P_{U_m}|P_S)),
     \end{align}
     for $\alpha >k$, or
     \begin{align*}
        D_\alpha(P_{h(S;X)}&\|P_{U_m}|P_S)\nonumber \\
        \leq  & \,\frac{\alpha(k-1)}{k(\alpha-1)}D_k(P_{h(S;X)}\|P_{U_m}|P_S) + \frac{\alpha-k}{k(\alpha-1)}m.
    \end{align*}
Combining this with \eqref{eq: D_k bound} yields \eqref{eq: alpha > k}.
    Letting $\alpha$ approach infinity in the last inequality, we obtain 
    \begin{align}\label{eq: infty bound}
        D_\infty(P_{h(S;X)}\|P_{U_m}|P_S) \leq  \frac{k-1}{k}D_k(P_{h(S;X)}\|P_{U_m}|P_S) + \frac{m}{k}.
    \end{align}
Again using \eqref{eq: D_k bound}, we obtain \eqref{eq: alpha = infty}.
\end{proof}
\begin{remark}
    The asymptotic behavior of $D_\infty$ in \eqref{eq: alpha = infty} was previously found in \cite{kavian2023output}, Theorem 3. Passing to asymptotics in the regime of $k=q^{\rho m}$ for a sufficiently
 small $\rho$ and letting $m\to\infty$, we can also obtain exponential decline of the divergence from our bounds.
\end{remark}

Next we present an LHL where the uniformity is measured by $\infty$-divergence. This result is useful for distilling outputs with high min-entropy. 

\begin{theorem}\label{thm: alpha = infty}
        Let $h : \cS \times \cX \to \zz_q^m$ be a $k^*$-universal hash function. Suppose $X$ is a random variable defined on $\cX$ and let $S \sim \cS$ be independent of $X$.  If $m \leq H_{k}(X) - \log_q(\frac{k}{2\epsilon \ln q})$ then 
    \begin{align}\label{eq: DC Infty}
        D_\infty(P_{h(S;X)}\|P_{U_m}|P_S) \leq \frac{m}{k} + \epsilon.
    \end{align}
\end{theorem}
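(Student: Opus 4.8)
The plan is to derive Theorem~\ref{thm: alpha = infty} directly from Proposition~\ref{prop: alpha > k}, specifically from the $\alpha = \infty$ bound \eqref{eq: alpha = infty}. That inequality already gives
\[
D_\infty(P_{h(S;X)}\|P_{U_m}|P_S) \leq \frac{m}{k} + \log_q\Big(\frac{kq^{m-H_k(X)}}{\ln(kq^{m-H_k(X)}+1)}\Big),
\]
so the entire task reduces to showing that the hypothesis $m \leq H_k(X) - \log_q\!\big(\tfrac{k}{2\epsilon \ln q}\big)$ forces the logarithmic term to be at most $\epsilon$. Thus the goal is the purely analytic claim that
\[
\log_q\Big(\frac{kq^{m-H_k(X)}}{\ln(kq^{m-H_k(X)}+1)}\Big) \leq \epsilon.
\]

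First I would introduce the single scalar substitution $t := kq^{m-H_k(X)}$, which turns the term into $\log_q\big(t/\ln(t+1)\big)$ and turns the hypothesis into a clean upper bound on $t$. Indeed, rearranging $m \leq H_k(X) - \log_q(\tfrac{k}{2\epsilon\ln q})$ gives $q^{m-H_k(X)} \leq \tfrac{2\epsilon\ln q}{k}$, hence $t \leq 2\epsilon \ln q$. The target inequality is equivalent to $t \leq q^\epsilon \ln(t+1)$, i.e. $t/\ln(t+1) \leq q^\epsilon$, so it suffices to show that the bound $t \leq 2\epsilon\ln q$ implies $t \leq q^\epsilon \ln(t+1)$.

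The core estimate I would use is the elementary bound $t \leq e^\delta \ln(t+1)$ whenever $t \leq 2\delta$ for $\delta>0$; combined with the inequality $\ln(x+1) \geq x/(1+x/2) \ge x - x^2/2$ (or more simply a comparison of the two sides at the endpoint and a monotonicity/convexity check), this gives the desired control. Concretely, since $\ln(t+1) \geq t - t^2/2 \geq t(1 - \delta)$ on $t \le 2\delta$, I would verify that $t/\ln(t+1) \leq 1/(1-\delta) \leq e^\delta = q^{\delta/\ln q}$. Taking $\delta = \epsilon \ln q$ yields $q^\epsilon$ on the right, as needed. The reader may also appeal directly to the function $\gamma$ of Remark~\ref{rmk: gamma}: since $x/\ln(x+1)$ is increasing, the bound $t \le 2\epsilon\ln q$ gives $t/\ln(t+1) \le 2\epsilon\ln q/\ln(2\epsilon\ln q+1)$, and a short check confirms this last quantity is at most $q^\epsilon$.

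The main obstacle is not conceptual but the bookkeeping of the transcendental inequality $t/\ln(t+1)\le q^\epsilon$: one must confirm it holds uniformly for all admissible $t$ (i.e. all $t\le 2\epsilon\ln q$) rather than merely at the boundary, which requires checking the monotonicity of $t\mapsto t/\ln(t+1)$ so that the worst case occurs at the largest admissible $t$. Once that monotonicity is in hand, the substitution $t=2\epsilon\ln q$ reduces everything to the single inequality $2\epsilon\ln q \le q^\epsilon \ln(2\epsilon\ln q+1)$, which is elementary and can be dispatched with the $\ln(1+x)$ lower bound above. I expect the entire argument to be short, with the only care needed in tracking the base-$q$ logarithms consistently and in justifying that the derived bound on $t$ is tight enough to absorb the $\ln$ denominator.
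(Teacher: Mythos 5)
Your route is genuinely different from the paper's, and it is worth seeing why the paper's is shorter. The paper proves Theorem~\ref{thm: alpha = infty} in one line from \eqref{eq: infty bound} combined with the \emph{crude} Poisson-moment estimate \eqref{eq: D_k bound 2}: under the hypothesis $m \leq H_k(X) - \log_q(\tfrac{k}{2\epsilon\ln q})$ one has $D_k(P_{h(S;X),S}\|P_{U_m}P_S) \leq \frac{k^2}{2q^{H_k(X)-m}(k-1)\ln q} \leq \frac{k\epsilon}{k-1}$, and multiplying by $\frac{k-1}{k}$ in \eqref{eq: infty bound} gives exactly $\epsilon$ --- no transcendental inequality ever appears, because the simple bound is tailored to the form of the hypothesis. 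You instead start from the sharper Ahle-based bound \eqref{eq: alpha = infty} of Proposition~\ref{prop: alpha > k} and must then verify $t/\ln(t+1)\leq q^\epsilon$ for $t=kq^{m-H_k(X)}\leq 2\epsilon\ln q$. This is a legitimate alternative (in effect you show that the explicit hypothesis implies the relaxed $\gamma$-based condition the paper states right after the theorem), but it costs extra analytic work, and that is where your write-up contains a real error.

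The flawed step is the concrete chain $t/\ln(t+1)\leq 1/(1-\delta)\leq e^\delta$ with $\delta=\epsilon\ln q$: the second inequality is backwards. From $e^{-\delta}\geq 1-\delta$ one gets $e^\delta\leq \frac{1}{1-\delta}$ for $\delta\in(0,1)$, so $1/(1-\delta)$ sits \emph{above} $e^\delta$, not below it (and for $\delta\geq 1$ the quantity $1/(1-\delta)$ is meaningless as an upper bound). In other words, the lower bound $\ln(1+t)\geq t-t^2/2$ is too lossy to close the argument. The repair is the stronger inequality you yourself cite but then abandon: $\ln(1+t)\geq \frac{t}{1+t/2}$, which yields
\begin{align*}
\frac{t}{\ln(1+t)} \;\leq\; 1+\frac{t}{2} \;\leq\; 1+\delta \;\leq\; e^{\delta} \;=\; q^{\epsilon},
\end{align*}
valid for every $\delta=\epsilon\ln q>0$ with no case split. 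Combined with the monotonicity of $t\mapsto t/\ln(1+t)$, which you correctly flag and which holds because $\ln(1+t)\geq t/(1+t)$, this one-line substitution makes your proof complete and correct.
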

This theorem follows immediately from \eqref{eq: infty bound} and \eqref{eq: D_k bound 2}.

A less explicit but more relaxed assumption that implies the same conclusion as \eqref{eq: DC Infty} is as follows:
 $$m \leq H_k(X)+ \log_q(\gamma(q^\epsilon)),$$ 
where $\gamma(\cdot)$ was defined in Remark \ref{rmk: gamma}. 

In summary, using a $k^*$-universal hash function with $k$ large enough compared to $m$ enables the generation of bit strings with high min-entropy that are nearly independent of the seed. 
A downside of this approach to generating uniform bits is its potential reliance on rather large seed lengths.

\subsection{Largest hash bucket}
Observe that $D_\infty(P_{h(S;X)} \| P_{U_m} | P_S)$ quantifies the probability of the most likely element produced by leftover hashing, averaged over all possible seeds. 
A closely related concept is the size of the ``largest hash bucket,'' where we estimate the maximum-frequency output element (also averaged over all seeds) when hashing all elements from a subset of $\cX$. 

To phrase this problem more formally, suppose that $\cA$ is a subset of $\cX$, and we apply a $k^*$-universal
hash function to every element of $\cA$. What is the (expected) size of the largest subset of $\cA$ on which
$h$ takes the same value? An upper bound for this quantity is given below. 

\begin{proposition}
    Let $k \in \{2,3\dots\}$. Let $h : \cS \times \cX \to \zz_q^m$ be $k^*$-universal. Let $S$ be a uniform random variable defined on $\cS$ and let $\cA$ be a subset of $\cX$. Then
    \begin{align*}
        \ee_{S}\big[\max_{u \in \zz_q^m}|\{x \in \cA: h(S,x) = u\}|\big] \leq \frac{kq^{\frac{m}{k}}}{\ln(kq^{m}/|\cA|+1)}.
    \end{align*}
\end{proposition}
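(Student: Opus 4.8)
The plan is to estimate the expected largest bucket by the moment method, reducing the problem to a collision-probability sum that I can handle with the same combinatorial machinery already developed for Theorem~\ref{thm: main1}. For a seed $s\in\cS$ and an output $u\in\zz_q^m$, write $N_u(s)=|\{x\in\cA:h(s,x)=u\}|$ for the bucket size. Since the $N_u(s)$ are nonnegative, the elementary $\ell_\infty\le\ell_k$ bound gives $\max_u N_u(S)\le\big(\sum_u N_u(S)^k\big)^{1/k}$, and because $y\mapsto y^{1/k}$ is concave, Jensen's inequality yields
\begin{align*}
\ee_S\big[\max_u N_u(S)\big]\le \ee_S\Big[\big(\textstyle\sum_u N_u(S)^k\big)^{1/k}\Big]\le\Big(\ee_S\big[\textstyle\sum_u N_u(S)^k\big]\Big)^{1/k}.
\end{align*}
So the whole task is to bound the single moment $\ee_S\big[\sum_u N_u(S)^k\big]$, and the choice of exponent $k$ is forced: $k^\ast$-universality only controls collisions of up to $k$ distinct inputs, so $k$ is the largest moment I can evaluate.

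Next I would expand $N_u(S)^k=\sum_{x^k\in\cA^k}\prod_{i=1}^k\1\{h(S,x_i)=u\}$ and sum over $u$, exactly as in the passage from \eqref{eq: u0} to \eqref{eq: u11}. The $u$-sum collapses the indicators into a single collision event, so after taking expectation over $S$,
\begin{align*}
\ee_S\big[\textstyle\sum_u N_u(S)^k\big]=\sum_{x^k\in\cA^k}\Pr_S\big(h(S,x_1)=\dots=h(S,x_k)\big)\le\sum_{x^k\in\cA^k}q^{-m(\eta(x^k)-1)},
\end{align*}
where $\eta(x^k)$ is the number of distinct entries and the inequality is $k^\ast$-universality. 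I then reuse the partition $T=\{T_\cP\}$ of $\cA^k$ indexed by partitions of $\{1,\dots,k\}$: grouping the tuples by the number $l$ of distinct entries, the key point is that this time the inner sum is a pure count rather than a H\"older estimate. For an $l$-partition $\cP$, the tuples in $T_\cP\cap\cA^k$ number $|\cA|(|\cA|-1)\cdots(|\cA|-l+1)\le|\cA|^l$, so
\begin{align*}
\ee_S\big[\textstyle\sum_u N_u(S)^k\big]\le\sum_{l=1}^k q^{-m(l-1)}\stirlingII{k}{l}|\cA|^l=q^m\sum_{l=1}^k\stirlingII{k}{l}\big(|\cA|q^{-m}\big)^l.
\end{align*}
This is exactly $q^m\,\ee[Z^k]$ for $Z\sim\Pois(\lambda)$ with $\lambda=|\cA|q^{-m}$, the same Poisson-moment structure observed after Theorem~\ref{thm: main1}.

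Finally I would apply the sharp Poisson moment bound of \cite{ahle2022sharp} used in Remark~\ref{rmk: gamma}, namely $\ee[(Z/\lambda)^k]\le\big(\tfrac{k/\lambda}{\ln(k/\lambda+1)}\big)^k$. With $k/\lambda=kq^m/|\cA|$, writing $\ee[Z^k]=\lambda^k\ee[(Z/\lambda)^k]$ cancels the powers of $\lambda$ and leaves $\ee_S\big[\sum_u N_u(S)^k\big]\le q^m\big(\tfrac{k}{\ln(kq^m/|\cA|+1)}\big)^k$; taking the $k$-th root and combining with the Jensen bound above produces the claimed $\tfrac{kq^{m/k}}{\ln(kq^m/|\cA|+1)}$. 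The only real obstacle is the first step — recognizing that the expected maximum should be controlled through the $k$-th power sum rather than directly — since once the moment is written out, the calculation runs parallel to the proof of Theorem~\ref{thm: main1}, with the probability weights replaced by a falling-factorial count and the final bound supplied by the same Poisson estimate.
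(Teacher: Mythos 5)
Your proof is correct and is essentially the paper's argument in disguise: the paper's step $D_\infty(\cdot\,|s)\le\frac{k-1}{k}D_k(\cdot\,|s)+\frac{m}{k}$ (via \eqref{eq: temp1}) is exactly your $\ell_\infty\le\ell_k$ bound on bucket sizes after introducing an auxiliary uniform variable $X\sim\cA$, and it is followed by the same Jensen step and the same Ahle Poisson-moment bound \eqref{eq: D_k bound} with $H_k(X)=\log_q|\cA|$. The only difference is presentational: you recompute the $k$-th moment directly, with an exact falling-factorial count $|\cA|(|\cA|-1)\cdots(|\cA|-l+1)\le|\cA|^l$ of the tuples in each $T_\cP$, whereas the paper recycles Theorem~\ref{thm: main1} as a black box, whose H\"older estimate specializes to the same $|\cA|^l$ bound for the uniform distribution.
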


Before we proceed to the proof, it is important to distinguish the setting we consider here from what was discussed in the previous section. 
In the earlier section, we hashed elements sampled from a random source, whereas in this section, we will hash all elements from a specific subset (without replacement). 
However, we will demonstrate that the latter case can also be modeled using an auxiliary random variable.
Thus, we can apply similar estimates from the previous section.

\begin{proof}
    To streamline our calculations, we introduce an additional random variable, $X$. Specifically, let $X$ be a uniform random variable defined on $\cA$, independent of $S$. We proceed as follows:
    \begin{align}
    \ee_{S}\big[&\max_{u \in \zz_q^m}|\{x \in \cA: h(S,x) = u\}|\big]\nonumber\\
        &= \sum_{s\in \cS}P_S(s)\max_{u \in \zz_q^m}\sum_{x\in \cA}\1\{h(s,x)=u\} \nonumber \\
        &=  \sum_{s\in \cS}P_S(s)\max_{u \in \zz_q^m}\sum_{x\in \cA}P_{h(S,X)|S,X}(u|s,x)\nonumber\\
        &=  |\cA|\sum_{s\in \cS}P_S(s)\max_{u \in \zz_q^m}\frac{1}{|\cA|}\sum_{x\in \cA}P_{h(S,X)|S,X}(u|s,x)\nonumber\\
        &=  |\cA|\sum_{s\in \cS}P_S(s)\max_{u \in \zz_q^m}P_{h(S,X)|S}(u|s)\nonumber\\
        &=  \frac{|\cA|}{q^m}\sum_{s\in \cS}P_S(s)\max_{u \in \zz_q^m}\frac{P_{h(S,X)|S,}(u|s)}{P_{U_m}(u)}\nonumber\\
        &=  \frac{|\cA|}{q^m}\sum_{s\in \cS}P_S(s)q^{D_\infty(P_{h(S,X)|S}(\cdot|s)\|P_{U_m})}\nonumber\\
        &\leq   \frac{|\cA|}{q^m}\sum_{s\in \cS}P_S(s)q^{\frac{k-1}{k}D_k(P_{h(S;X)|S}(\cdot|s)\|P_{U_m}) + \frac{m}{k}}
        \label{eq: temp1}\\
        &\leq   \frac{|\cA|}{q^m}q^{\frac{m}{k}}\Big[\sum_{s\in \cS}P_S(s)q^{(k-1)D_k(P_{h(S;X)|S}(\cdot|s)\|P_{U_m})}\Big]^{1/k}\label{eq: temp2},
    \end{align}
 where \eqref{eq: temp1} follows similarly to \eqref{eq: infty bound} and \eqref{eq: temp2}
is obtained by using Jensen's inequality.     Now observe that for any $\alpha \in (1,\infty),$
    \begin{align*}
        \sum_{s\in \cS}P_S(s)& q^{(\alpha-1)D_\alpha(P_{h(S,X)|S}(\cdot|s)\|P_{U_m|S}(\cdot|s))}\\ 
        & = \sum_{s\in \cS}P_S(s) \sum_{u \in \zz_q^m}\frac{P_{h(S,X)|S}(u|s)^\alpha}{P_{U_m|S}(u|s)^{\alpha-1}}\\
        & = \sum_{s\in \cS}\sum_{u \in \zz_q^m}\frac{P_{h(S,X),S}(u,s)^\alpha}{[P_{U_m}(u)P_S(s)]^{\alpha-1}}\\
        & = q^{(\alpha-1)D_\alpha(P_{h(S;X),S}\|P_{U_m}P_S)}
    \end{align*}
(cf. \eqref{eq: DE}).    
    Thus we can write \eqref{eq: temp2} as 
    \begin{align*}
         \ee_{S}\big[\max_{u \in \zz_q^m}&|\{x \in \cA: h(S,x) = u\}|\big] \\
        & \leq   \frac{|\cA|}{q^m}q^{\frac{m}{k}}q^{\frac{k-1}{k}D_k(P_{h(S;X),S}\|P_{U_m}P_S)}\\
        & \leq  \frac{|\cA|}{q^m}q^{\frac{m}{k}}\frac{kq^{m-H_k(X)}}{\ln(kq^{m-H_k(X)}+1)}\\
        & =  \frac{|\cA|}{q^m}q^{\frac{m}{k}}\frac{kq^{m-\log_q|\cA|}}{\ln(kq^{m-\log_q |\cA|}+1)}\\
        & =  \frac{kq^{\frac{m}{k}}}{\ln(kq^{m}/|\cA|+1)},
    \end{align*}
    where the second inequality follows from \eqref{eq: D_k bound}.
\end{proof}

The size of the largest hash bucket impacts the worst-case complexity of hash operations in practical settings. 
For example, the time complexity of lookups for a hashed element is proportional to the size of its hash bucket, with the worst-case search time dictated by the largest bucket. 
This section's results show using $k^*$-universal hash functions with sufficiently large $k$ 
can reduce the expected size of the largest hash bucket, thereby improving the worst-case efficiency of hash operations when averaged over all seeds. 

A common question regarding hash functions is: when hashing $N$ elements into $N$ buckets, what is the expected size of the largest hash bucket, averaged over all seeds? 
A well-known folklore result states that for universal hash functions, this size is $O(\sqrt{N})$. 
Another such result says that, if the $N$ elements are assigned uniformly and independently to $N$ buckets, the expected size of the largest bucket is $O(\ln N / \ln\ln N)$. 
A similar result holds for linear hash functions \cite{alon1999linear}, where the expected largest bucket size is  $O(\ln N \ln\ln N)$.

Letting $N = |\cA| = q^m$ in the last proposition, we obtain that the expected size of the largest hash bucket for $k^*$-universal hash functions is bounded by above $kN^{1/k}/\ln(k+1)$.  For $k=2$, this matches the aforementioned result for universal hash functions.
Moreover, when $k = m = \log_q N$, the behavior of $k^*$-universal hash functions closely approximates that of uniform and independent assignments in terms of the largest hash bucket. In other words, $k^\ast$-universality with $k=\log_q N$ matches 
the $N$-wise independent, i.e., fully random assignment in terms of the expected size of the largest hash bucket.

\section{Leftover hashing with side information}\label{sec: side}

In the literature on information-theoretic security, a common problem is to distill random, uniform bit strings that remain independent of any information accessible to adversarial parties \cite{bennett1988privacy, tyagi2023information}. For example, consider a scenario where we aim to extract a uniform distribution from a source $X$ while an adversary has access to a random variable $Z$ that is correlated with $X$. Our objective is to generate bits that are as close to uniform as possible and nearly independent of the adversary’s side information. This procedure is referred to as privacy amplification \cite{bennett1988privacy}. 

In this version of the problem, we aim to produce a strongly uniform random variable that is nearly independent of $Z$ in a strong sense. To meet these requirements, we adapt the statement of the LHL to the new setting by making slight modifications to previous theorems. Let us begin with defining the {\em conditional \Renyi entropy}\footnote{There are multiple versions of conditional \Renyi entropies. The one we use here is based on \cite{hayashi2011exponential}. For a more detailed account of conditional \Renyi entropies see \cite{iwamoto2013revisiting}.}. For $\alpha \in (1,\infty)$, it is given by
\begin{align*}
    H_\alpha(X|Z) & = \frac{1}{1-\alpha} \log_q\Big( \sum_{z \in \cZ}P_Z(z)\sum_{x \in \cX}P_{X|Z}(x|z)^\alpha \Big).
\end{align*}

We will now state several claims analogous to the earlier results, but additionally accounting for side information.
For all of them, we use identical assumptions, so rather than repeating them several times, we state them here.

\vspace*{.05in}{\em Assumptions $\text{\rm(XZS)}$}: {\em Let $X$ be a random variable supported on $\cX$ and let $Z$ be a random variable (possibly correlated with $X$) supported on a finite set $\cZ$. Let $S\sim \cS$ be a uniform random variable that is independent of both $X$ and $Z$.}
\vspace*{.05in}

This set of assumptions applies to all theorem-like statements in this section and will be suppressed below.

The following result forms an appropriate generalization of Theorem \ref{thm: main2}. It represents the most
general form of the LHL with side information that we claim, so it is stated first. Its proof follows the lines 
of the proof of Theorem~\ref{thm: main2}.
\begin{theorem}\label{thm: side main}
     Let $k \in \{2,3\dots\}$ and $\alpha \in (1,k]$.  Let $h : \cS \times \cX \to \zz_q^m$ be $k^*$-universal, then
    \begin{align}\label{eq: side continuous}
        &q^{(\alpha-1)D_\alpha(P_{h(S;X),S,Z}\|P_{U_m}P_SP_Z)} \nonumber\\
        & \qquad \leq  \sum_{l=1}^{\lceil \alpha \rceil-1} l \stirlingII{\lceil \alpha \rceil-1}{l} q^{(\alpha-l)(m-H_{\alpha}(X|Z)} \nonumber\\
        & \qquad \quad \quad + \sum_{l=1}^{\lceil \alpha \rceil}\stirlingII{\lceil \alpha\rceil-1}{l-1} q^{(\lceil \alpha \rceil-l)(m-H_{ \alpha }(X|Z)}. 
    \end{align}
\end{theorem}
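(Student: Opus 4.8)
The plan is to mirror the two-stage structure used for Theorems~\ref{thm: main1} and~\ref{thm: main2}: first prove the integer case $\alpha=k$ (the content of Theorem~\ref{thm: side main 0}), and then interpolate to all real $\alpha\in(1,k]$ by repeating the argument of Appendix~\ref{Appendix A}, with $H_\alpha(X)$ everywhere replaced by the conditional entropy $H_\alpha(X|Z)$. The only genuinely new ingredient is the random variable $Z$, which is handled by conditioning on $Z=z$ before any estimate from the unconditional proof is applied.

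For the integer case I would begin as in \eqref{eq: DE}, writing
$$
q^{(k-1)D_k(P_{h(S;X),S,Z}\|P_{U_m}P_SP_Z)}=\sum_{u,s,z}\frac{P_{h(S;X),S,Z}(u,s,z)^{k}}{\big(P_{U_m}(u)P_S(s)P_Z(z)\big)^{k-1}}.
$$
Since $S$ is independent of $(X,Z)$, we have $P_{h(S;X),S,Z}(u,s,z)=P_S(s)P_Z(z)\sum_{x}\1\{h(s,x)=u\}P_{X|Z}(x|z)$, so the factors $P_S(s)$ and $P_Z(z)$ can be pulled out exactly as $P_S(s)$ was in \eqref{eq: u0}. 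Expanding the $k$-th power into a product over $k$ copies of the inner sum, summing over $u$ to collapse the indicators into a single collision event, and invoking $k^*$-universality then reproduces the analogue of \eqref{eq: u11}:
$$
q^{(k-1)D_k(P_{h(S;X),S,Z}\|P_{U_m}P_SP_Z)}\le q^{m(k-1)}\sum_{z}P_Z(z)\sum_{x^k\in\cX^k}\Big(\prod_{i=1}^kP_{X|Z}(x_i|z)\Big)q^{-m(\eta(x^k)-1)}.
$$
From here I partition $\cX^k$ by the equality pattern of its coordinates exactly as before.

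The decisive step, which I expect to be the main obstacle, is adapting the H\"older estimate \eqref{eq: Holder2}--\eqref{eq: prob bound}. For each fixed $z$ the inner sum over a block $T_\cP$ is controlled by applying H\"older's inequality to the conditional law $P_{X|Z}(\cdot|z)$, giving $\sum_{x^k\in T_\cP}\prod_iP_{X|Z}(x_i|z)\le\sum_xP_{X|Z}(x|z)^{k-l+1}$; only after this pointwise-in-$z$ bound is weighted by $P_Z(z)$ and summed over $z$ does the conditional R\'enyi entropy surface, through $\sum_zP_Z(z)\sum_xP_{X|Z}(x|z)^{k-l+1}=q^{-(k-l)H_{k-l+1}(X|Z)}$. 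This ordering---condition first, apply H\"older pointwise in $z$, then average---is precisely what the definition of $H_\alpha(X|Z)$ used here accommodates, since there the logarithm sits outside the expectation over $z$; with the logarithm placed inside (as in other conditional R\'enyi entropies) the identity would fail and the argument would not close. Collecting the $\stirlingII{k}{l}$ partitions of each type and using monotonicity of the R\'enyi entropy to replace $H_{k-l+1}(X|Z)$ by $H_k(X|Z)$ yields Theorem~\ref{thm: side main 0}; feeding this into the interpolation of Appendix~\ref{Appendix A} then gives \eqref{eq: side continuous}.
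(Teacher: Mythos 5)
Your proposal is correct and takes essentially the same route as the paper's own (sketched) proof: condition on $Z=z$, run the unconditional argument of Theorems~\ref{thm: main1}/\ref{thm: main2} pointwise with $P_{X|Z}(\cdot|z)$ in place of $P_X$, and average over $z$, where---as you rightly stress---the log-outside-the-expectation form of $H_\alpha(X|Z)$ is exactly what makes the averaged moment sums $\sum_z P_Z(z)\sum_x P_{X|Z}(x|z)^{\beta}$ collapse into $q^{-(\beta-1)H_\beta(X|Z)}$. The one step you leave implicit, that replacing $H_{k-l+1}(X|Z)$ (and its analogues in the non-integer case) by $H_\alpha(X|Z)$ requires monotonicity in the order of this conditional \Renyi entropy, which itself needs a small Jensen argument after the $z$-average, is precisely the ``applying Jensen's inequality appropriately'' clause in the paper's proof sketch.
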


We limit ourselves to a proof sketch since the argument closely follows the proof of Theorem \ref{thm: main2}.
A straightforward calculation gives:
\begin{align}
    & q^{(\alpha-1)D_\alpha(P_{h(S;X),S,Z}\|P_{U_m}P_SP_Z)} \nonumber \\
    &\quad=  \sum_{z \in \cZ}P_Z(z)q^{m(\alpha-1)}\sum_{s\in\cS}P_S(s)\nonumber\\
    &\quad\quad \times \sum_{u\in\zz_q^m}
          \prod_{i=1}^{k-1}\Big[\sum_{x_i \in \cX}\1\{h(s,x_i) = u\}P_{X|Z}(x_i|z)\Big]\nonumber \\ 
    & \quad \quad \quad \times\Big[\sum_{x_{k} \in \cX}\1\{h(s,x_k) = u\}P_{X|Z}(x_{k}|z)\Big]^{\alpha-k+1} \label{eq: alpha expand 2}
\end{align}

For each fixed $z \in \cZ$, we can bound the inner sums as in the proof of Theorem \ref{thm: main2}, with the probability mass function $P_X$ replaced by $P_{X|Z}(\cdot|z)$. Averaging over $z \in \cZ$, and applying Jensen’s inequality appropriately, we obtain the desired result.

Similarly, we can obtain a generalized version of Theorem \ref{thm: main 0} that includes the side information term.
\begin{theorem}\label{thm: side main 0}
    Let $\epsilon >0 $ and let $k \in \{2,3\dots\}$ and $\alpha \in \{2,3,\dots,k\}$. 
   Let $h : \cS \times \cX \to \zz_q^m$ be $k^*$-universal. If $m \leq H_{\alpha}(X|Z) -  \log_q(\frac{\alpha^2}{2\epsilon(\alpha-1)\ln q})$, then 
    \begin{align}
        D_\alpha(P_{h(S;X),S,Z}\|P_{U_m}P_SP_Z) \leq \epsilon.
    \end{align}
\end{theorem}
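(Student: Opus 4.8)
The plan is to obtain Theorem~\ref{thm: side main 0} as a direct corollary of the conditional moment bound in Theorem~\ref{thm: side main}, mirroring exactly how Theorem~\ref{thm: main 0} was derived from Theorem~\ref{thm: main1} (equivalently, the $\alpha=k$ case of Theorem~\ref{thm: main2}). The key observation is that the right-hand side of \eqref{eq: side continuous} has the \emph{same} structure as the unconditional bound in \eqref{eq: main1}, with the only change being that $H_\alpha(X)$ is everywhere replaced by the conditional entropy $H_\alpha(X|Z)$. Since the Poisson-moment interpretation used in the proof of Theorem~\ref{thm: main 0} depends only on this algebraic form (and not on what the entropy ``measures''), the same machinery applies verbatim with $\lambda = q^{H_\alpha(X|Z)-m}$.

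Concretely, first I would specialize Theorem~\ref{thm: side main} to the integer case $\alpha \in \{2,3,\dots,k\}$, so that $\lceil \alpha \rceil = \alpha$ and, by the Stirling identity $\stirlingII{\alpha}{l} = l\stirlingII{\alpha-1}{l} + \stirlingII{\alpha-1}{l-1}$ already noted after Theorem~\ref{thm: main2}, the two sums in \eqref{eq: side continuous} collapse into the single sum
\begin{align*}
    q^{(\alpha-1)D_\alpha(P_{h(S;X),S,Z}\|P_{U_m}P_SP_Z)} \leq \sum_{l=1}^{\alpha}\stirlingII{\alpha}{l} q^{(\alpha-l)(m-H_{\alpha}(X|Z))}.
\end{align*}
This is precisely \eqref{eq: main1} with $k$ replaced by $\alpha$ and $H_k(X)$ replaced by $H_\alpha(X|Z)$. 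I would then recognize the right-hand side as $\ee[(Z/\lambda)^\alpha]$ for $Z \sim \Pois(\lambda)$ with $\lambda = q^{H_\alpha(X|Z)-m}$, exactly as in the remark following the proof of Theorem~\ref{thm: main1}.

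Next I would apply the standard Poisson-moment bound \eqref{eq: pms}, which gives
\begin{align*}
    q^{(\alpha-1)D_\alpha(P_{h(S;X),S,Z}\|P_{U_m}P_SP_Z)} \leq \exp\Big(\frac{\alpha^2}{2q^{H_\alpha(X|Z)-m}}\Big),
\end{align*}
and hence, taking $\log_q$ and dividing by $(\alpha-1)$,
\begin{align*}
    D_\alpha(P_{h(S;X),S,Z}\|P_{U_m}P_SP_Z) \leq \frac{\alpha^2}{2 q^{H_\alpha(X|Z)-m}(\alpha-1)\ln q}.
\end{align*}
Imposing the hypothesis $m \leq H_\alpha(X|Z) - \log_q\big(\frac{\alpha^2}{2\epsilon(\alpha-1)\ln q}\big)$ forces $q^{H_\alpha(X|Z)-m} \geq \frac{\alpha^2}{2\epsilon(\alpha-1)\ln q}$, which makes the right-hand side at most $\epsilon$, completing the argument.

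I do not anticipate a genuine obstacle here: all the analytic content lives in the conditional moment bound of Theorem~\ref{thm: side main}, which the excerpt already grants me. The only point requiring a little care is making sure the conditioning is handled correctly when invoking the Poisson interpretation — that is, confirming that averaging over $z$ with weights $P_Z(z)$ (as done inside the proof sketch of Theorem~\ref{thm: side main}) really does produce the conditional entropy $H_\alpha(X|Z)$ in the clean form defined at the start of this section, rather than the per-$z$ quantity $\sum_x P_{X|Z}(x|z)^\alpha$. Since Theorem~\ref{thm: side main} has already been stated with $H_\alpha(X|Z)$ appearing in the exponent, this bookkeeping is encapsulated upstream, and the present proof reduces to the purely mechanical substitution described above.
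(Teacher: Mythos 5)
Your proposal is correct and takes essentially the same route as the paper: the paper gives no separate proof of Theorem~\ref{thm: side main 0}, stating only that it follows ``similarly'' to Theorem~\ref{thm: main 0}, which is precisely your argument --- specialize Theorem~\ref{thm: side main} to integer $\alpha$, collapse the two sums via the identity $\stirlingII{\alpha}{l}=l\stirlingII{\alpha-1}{l}+\stirlingII{\alpha-1}{l-1}$, and apply the Poisson-moment bound \eqref{eq: pms} with $\lambda=q^{H_\alpha(X|Z)-m}$. The only cosmetic issue is your reuse of the symbol $Z$ for both the side information and the Poisson variable; pick a different letter for the latter.
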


In its turn, Corollary \ref{cor: LHL (1,2]} extends to the following statement, which is due to 
\cite{hayashi2016equivocations}.
\begin{corollary}\label{thm: side LHL (1,2]}
    Let $\epsilon > 0$ and $\alpha \in (1,2]$. Let $h : \cS \times \cX \to \zz_q^m$ be $2$-universal.  If $m \leq H_{\alpha}(X|Z)-\frac{1}{\alpha-1}\log_q (\frac{1}{\epsilon \ln q (\alpha-1)})$. Then
    \begin{align}\label{eq: side LHL (1,2]}
        D_\alpha(P_{h(S;X),S,Z}\|P_{U_m}P_SP_Z) \leq \epsilon.
    \end{align}
\end{corollary}

Plainly, the results presented in Section \ref{sec: min} can also be adjusted so that they incorporate side information. As expected, in this case too, the proximity measure must be weakened to achieve the uniformity guarantees.

\begin{proposition}\label{prop: side alpha > k 2}
     Let $k \in \{2,3\dots\}$. Let $h : \cS \times \cX \to \zz_q^m$ be $k^*$-universal. Then for $\alpha \in (k,\infty)$
    \begin{align}\label{eq: side alpha > k 2}  
        D_\alpha&(P_{h(S;X)} \|P_{U_m}|P_{SZ})\nonumber\\
        &\leq \frac{\alpha-k}{k(\alpha-1)}m + \frac{\alpha}{\alpha-1}\log_q \Big(\frac{kq^{m-H_k(X|Z)}}{\ln(kq^{m-H_k(X|Z)}+1)}\Big),    
    \end{align}
    and
    \begin{align}\label{eq: side alpha = infty 2}
         D_\infty&(P_{h(S;X)}\|P_{U_m}|P_{SZ})\nonumber\\ &\leq \frac{m}{k} + \log_q\Big(\frac{kq^{m-H_k(X|Z)}}{\ln(kq^{m-H_k(X|Z)}+1)}\Big).
    \end{align}
\end{proposition}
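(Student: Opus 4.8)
The plan is to transcribe the proof of Proposition~\ref{prop: alpha > k} line by line, carrying the extra variable $Z$ through every step: everywhere $P_S$ appears it becomes $P_SP_Z$, the conditional divergence conditions on $(S,Z)$ rather than on $S$ alone, and the unconditional collision-type entropy $H_k(X)$ is replaced by its conditional version $H_k(X|Z)$ from Section~\ref{sec: side}. First I would introduce the $Z$-conditioned analog of the auxiliary quantity used there,
\[
\tilde H_\alpha(h(S;X)|S,Z)=\frac1{1-\alpha}\sum_{s\in\cS,\,z\in\cZ}P_S(s)P_Z(z)\log_q\Big(\sum_{u}P_{h(S;X)|S,Z}(u|s,z)^\alpha\Big),
\]
which is simply the $P_SP_Z$-average of the ordinary \Renyi entropies $H_\alpha(P_{h(S;X)|S=s,Z=z})$. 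Because $D_\alpha(P\|P_{U_m})=m-H_\alpha(P)$ for the uniform reference, averaging over $(s,z)$ yields the identity $D_\alpha(P_{h(S;X)}\|P_{U_m}|P_{SZ})=m-\tilde H_\alpha(h(S;X)|S,Z)$, exactly the conditioned form of \eqref{eq: DC alpha}.

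Next I would invoke monotonicity. Since $\frac{\alpha-1}{\alpha}H_\alpha$ increases in $\alpha$, the same holds for $\frac{\alpha-1}{\alpha}\tilde H_\alpha$, as this quantity is a convex combination over $(s,z)$ of increasing functions. Applying this at $k<\alpha$ together with the identity above reproduces, verbatim as in Proposition~\ref{prop: alpha > k},
\[
D_\alpha(P_{h(S;X)}\|P_{U_m}|P_{SZ})\le\frac{\alpha(k-1)}{k(\alpha-1)}D_k(P_{h(S;X)}\|P_{U_m}|P_{SZ})+\frac{\alpha-k}{k(\alpha-1)}m,
\]
and, in the limit $\alpha\to\infty$, the inequality $D_\infty(P_{h(S;X)}\|P_{U_m}|P_{SZ})\le\frac{k-1}{k}D_k(P_{h(S;X)}\|P_{U_m}|P_{SZ})+\frac mk$.

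It then remains to bound $D_k(P_{h(S;X)}\|P_{U_m}|P_{SZ})$. By Jensen's inequality (the conditioned-on-$SZ$ version of the estimate recorded in Section~\ref{sec: min}) this conditional divergence is dominated by the joint one $D_k(P_{h(S;X),S,Z}\|P_{U_m}P_SP_Z)$, which is exactly what Theorem~\ref{thm: side main} controls at $\alpha=k$: applying the Stirling recurrence $\stirlingII kl=l\stirlingII{k-1}l+\stirlingII{k-1}{l-1}$ to its right-hand side collapses it to $\sum_{l=1}^k\stirlingII kl q^{(k-l)(m-H_k(X|Z))}$, which is the normalized $k$th moment $\ee[(W/\lambda)^k]$ of $W\sim\Pois(\lambda)$ with $\lambda=q^{H_k(X|Z)-m}$. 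Feeding this into the sharp Poisson-moment estimate of Remark~\ref{rmk: gamma} gives $D_k(P_{h(S;X),S,Z}\|P_{U_m}P_SP_Z)\le\frac{k}{k-1}\log_q\big(kq^{m-H_k(X|Z)}/\ln(kq^{m-H_k(X|Z)}+1)\big)$. Substituting this bound into the two displayed inequalities of the previous paragraph yields \eqref{eq: side alpha > k 2} and \eqref{eq: side alpha = infty 2}, respectively.

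The only point that genuinely needs the new conditioning --- and the one I would check most carefully --- is that the entropy appearing in the final bound is the conditional \Renyi entropy $H_k(X|Z)$ with the logarithm \emph{outside} the average over $z$ (the Section~\ref{sec: side} definition), and not the averaged $\tilde H_k$. This is precisely the payoff of that definition and is already embedded in Theorem~\ref{thm: side main}: there the inner sums are bounded for each fixed $z$ with $P_X$ replaced by $P_{X|Z}(\cdot|z)$, producing factors $\sum_x P_{X|Z}(x|z)^{k-l+1}$, and averaging these over $z$ against $P_Z(z)$ returns exactly $q^{-(k-l)H_{k-l+1}(X|Z)}$, after which the monotonicity $H_{k-l+1}(X|Z)\ge H_k(X|Z)$ collapses the exponents. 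Keeping $H_k(X|Z)$ and $\tilde H_k$ distinct throughout is the main bookkeeping hazard; otherwise the argument is a faithful copy of the unconditional proof.
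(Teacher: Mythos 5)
Your proof is correct and is precisely the argument the paper intends: it replicates the proof of Proposition~\ref{prop: alpha > k} with $(S,Z)$-conditioning — the identity $D_\alpha(P_{h(S;X)}\|P_{U_m}|P_{SZ})=m-\tilde H_\alpha(h(S;X)|S,Z)$, monotonicity of $\frac{\alpha-1}{\alpha}\tilde H_\alpha$, Jensen to pass to the joint divergence, and Theorem~\ref{thm: side main} at $\alpha=k$ combined with the Poisson-moment bound of Remark~\ref{rmk: gamma} — which is exactly the adjustment the paper invokes when it says the Section~\ref{sec: min} results carry over with side information. Your closing remark correctly identifies the one genuine bookkeeping point, namely that the averaging over $z$ is linear and thus yields the $H_k(X|Z)$ of Section~\ref{sec: side} (logarithm outside the average) rather than $\tilde H_k$.
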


We can also state and prove a result analogous to Theorem \ref{thm: alpha = infty}.

\begin{theorem}\label{thm: side alpha = infty}
        Let $h : \cS \times \cX \to \zz_q^m$ be a $k^*$-universal hash function. If $m \leq H_{k}(X|Z) - \log_q\big(\frac{k}{2\epsilon \ln q}\big)$ then 
    \begin{align}
        D_\infty(P_{h(S;X)}\|P_{U_m}|P_{SZ}) \leq \frac{m}{k} + \epsilon.
    \end{align}
\end{theorem}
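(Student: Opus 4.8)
The plan is to derive Theorem~\ref{thm: side alpha = infty} as the natural ``side-information'' analogue of Theorem~\ref{thm: alpha = infty}, combining the conditional min-entropy version of the $D_k$-to-$D_\infty$ reduction with an explicit moment bound on $D_k$. The structure mirrors exactly the unconditional argument: there we had Theorem~\ref{thm: alpha = infty} follow ``immediately from \eqref{eq: infty bound} and \eqref{eq: D_k bound 2}.'' So I expect the present claim to follow immediately from the side-information counterparts of those two inequalities, both of which are essentially already in hand from Proposition~\ref{prop: side alpha > k 2} and Theorem~\ref{thm: side main 0}.

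First I would record the conditional analogue of \eqref{eq: infty bound}. The proof of Proposition~\ref{prop: side alpha > k 2} produces, by the same monotonicity-of-$\frac{\alpha-1}{\alpha}\tilde H_\alpha$ argument used before \eqref{eq: infty bound}, the bound
\begin{align}\label{eq: side infty bound}
    D_\infty(P_{h(S;X)}\|P_{U_m}|P_{SZ}) \leq \frac{k-1}{k}D_k(P_{h(S;X)}\|P_{U_m}|P_{SZ}) + \frac{m}{k}.
\end{align}
Second, I would supply the conditional version of the moment bound \eqref{eq: D_k bound 2}. Applying Jensen's inequality to pass from $D_\alpha(P_{h(S;X),S,Z}\|P_{U_m}P_SP_Z)$ to the conditional divergence $D_k(P_{h(S;X)}\|P_{U_m}|P_{SZ})$, and then invoking Theorem~\ref{thm: side main} (at $\alpha=k$) together with the Poisson-moment estimate \eqref{eq: pms}, gives
\begin{align}\label{eq: side D_k bound 2}
    D_k(P_{h(S;X)}\|P_{U_m}|P_{SZ}) \leq \frac{k^2}{2q^{H_k(X|Z)-m}(k-1)\ln q},
\end{align}
which is precisely \eqref{eq: D_k bound 2} with $H_k(X)$ replaced by the conditional entropy $H_k(X|Z)$.

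Combining \eqref{eq: side infty bound} with \eqref{eq: side D_k bound 2}, the $\frac{k-1}{k}$ factor cancels the $\frac{k-1}{k}$ in the denominator, leaving
\[
    D_\infty(P_{h(S;X)}\|P_{U_m}|P_{SZ}) \leq \frac{m}{k} + \frac{k}{2q^{H_k(X|Z)-m}\ln q}.
\]
The hypothesis $m \leq H_k(X|Z) - \log_q\bigl(\tfrac{k}{2\epsilon\ln q}\bigr)$ is equivalent to $q^{m-H_k(X|Z)} \leq \tfrac{2\epsilon\ln q}{k}$, i.e.\ $\frac{k}{2q^{H_k(X|Z)-m}\ln q}\leq\epsilon$, which yields the claimed inequality $D_\infty \leq \frac{m}{k}+\epsilon$. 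I do not anticipate a genuine obstacle here: every ingredient is a conditional restatement of an unconditional result proved earlier, and the only point requiring a little care is confirming that the Jensen step and the per-$z$ application of the moment bound combine correctly to give \eqref{eq: side D_k bound 2} with the conditional entropy in the exponent. Once that bookkeeping is verified, the theorem is immediate.
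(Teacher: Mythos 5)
Your proposal is correct and follows exactly the route the paper intends: the paper leaves Theorem~\ref{thm: side alpha = infty} as the ``analogous'' statement, to be proved just as Theorem~\ref{thm: alpha = infty} was, by combining the conditional $D_k$-to-$D_\infty$ reduction (the side-information version of \eqref{eq: infty bound}, implicit in Proposition~\ref{prop: side alpha > k 2}) with the Poisson-moment bound \eqref{eq: pms} applied to Theorem~\ref{thm: side main} at $\alpha=k$, passing from the joint to the conditional divergence via Jensen (where one uses $P_{SZ}=P_SP_Z$, guaranteed by assumption (XZS)). Your bookkeeping, including the cancellation of the $\frac{k-1}{k}$ factors and the translation of the hypothesis on $m$ into $\frac{k\,q^{m-H_k(X|Z)}}{2\ln q}\leq\epsilon$, is accurate, so nothing is missing.
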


\section{Concluding remarks}

In this work, we have established uniformity guarantees for $k^*$-universal hash functions. 
Specifically, we show that for $\alpha \in (1, k]$, both uniformity and independence properties can be ensured using the $\alpha$-\Renyi divergence, extending the previous results in \cite{hayashi2016equivocations} to all $\alpha\in (2,k]$.
In particular, we demonstrate that it is possible to extract nearly all of the $\alpha$-entropy of the source to generate uniform bits. For $\alpha > k$, we derive uniformity guarantees based on conditional \Renyi divergence. 
This conditional version provides the same uniformity guarantees as the unconditional even though it relies on  
a less stringent version of independence between the seed and the extracted bits. In particular, we provide 
min-entropy guarantees for $k^*$-universal hash functions and estimate the size of the largest hash bucket, a key factor in the worst-case performance of hash operations. Finally, we extend these uniformity guarantees to scenarios where uniform bits need to be distilled while ensuring independence from an adversary’s accessible information. This result strengthens security guarantees in cryptographic applications such as secret key generation.

The seed lengths required for $k^*$-universal hash functions can be rather large, so it is of interest 
to explore R{\'e}nyi-divergence based uniformity guarantees for extractors with shorter seed lengths.
An obstacle to this has been pointed in the literature, namely \cite[p.205]{Vadhan2012Pseudo}
implies that 2-\Renyi extractors that convert nearly all of the 2-\Renyi entropy into random bits, require seed length of at least $\min(\log_q|\cX|-m, m/2)-O(1)$. In comparison, 1-\Renyi extractors 
are capable of constructing nearly $H(X)$ random bits with an optimal seed length of $O(\log_q\log_q|\cX|)$, much
shorter than the known families of $k^*$-universal hash functions.
It is unclear to us whether $k^*$-universal hash functions can match the optimal seed lengths of $\alpha$-\Renyi extractors for $\alpha>1$.

In conclusion we note that the results of this work can be extended to almost $k^*$-universal hash functions \cite[p.~77]{tyagi2023information}, defined by replacing \eqref{eq: k-hash} with a relaxed upper bound of $O(q^{-m(k-1)})$. Our results can also be extended to smoothed versions of the 
 \Renyi entropy mentioned briefly in Proposition~\ref{thm: smoothed} above.

\appendix

\section{Proof of Theorem \ref{thm: main2}}\label{Appendix A}

The case of $\alpha = k$ was already established in Theorem \ref{thm: main1}, so it remains to extend its result to all non-integer values of $\alpha$ within the interval $(1, k]$. It is sufficient to establish it for $\alpha \in (k-1, k]$ since 
a $k^*$-universal hash function is also $l^*$-universal for every $l \in {2, \dots, k}$. In summary, we need prove the following.

\begin{proposition}\label{prop: main_k}
     Let $k \in \{2,3\dots\}$ and $\alpha \in (k-1,k]$. Let $X$ be a random variable defined on $\cX$ and let $S\sim \cS$ be a uniform random variable that is independent of $X$. Let $h : \cS \times \cX \to \zz_q^m$ be $k^*$-universal. Then
    \begin{align}
        q^{(\alpha-1)D_\alpha(P_{h(S;X),S}\|P_{U_m}P_S)} 
        & \leq \sum_{l=1}^{k-1}l\stirlingII{k-1}{l} q^{(\alpha-l)(m-H_{\alpha}(X))} \nonumber\\
        & + \sum_{l = 1}^{k}\stirlingII{k-1}{l-1}q^{(k-l)(m-H_{\alpha}(X))}. 
    \end{align}
\end{proposition}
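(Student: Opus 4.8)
The plan is to start from the exact identity, valid for any real $\alpha>1$,
\[
q^{(\alpha-1)D_\alpha(P_{h(S;X),S}\|P_{U_m}P_S)}=q^{m(\alpha-1)}\sum_{s}P_S(s)\sum_{u}P_u(s)^{\alpha},\qquad P_u(s):=\sum_{x\in\cX}\1\{h(s,x)=u\}P_X(x),
\]
obtained exactly as the first steps of \eqref{eq: u0} with the integer $k$ replaced by $\alpha$ (the collapse of the product requires no integer power up to this point). Working in the range $\alpha\in(k-1,k]$, set $\beta:=\alpha-k+1\in(0,1]$. The idea is to expand \emph{only} the integer part: writing $P_u^{\alpha}=P_u^{k-1}P_u^{\beta}$ and expanding $P_u^{k-1}=\sum_{x_1,\dots,x_{k-1}}\prod_i\1\{h(s,x_i)=u\}P_X(x_i)$, the sum over $u$ collapses onto the event $A=A(x_1,\dots,x_{k-1})$ that $h(s,x_1)=\dots=h(s,x_{k-1})=:u^\ast$, leaving $\sum_uP_u^\alpha=\sum_{x_1,\dots,x_{k-1}}\prod_iP_X(x_i)\,\1_A\,P_{u^\ast}^{\beta}$. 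I then split the colliding mass $P_{u^\ast}=W+R$ into the \emph{self} mass $W=\sum_{x\in V}P_X(x)$ carried by the set $V$ of distinct values among $x_1,\dots,x_{k-1}$ and the \emph{external} mass $R=\sum_{x\notin V}\1\{h(s,x)=u^\ast\}P_X(x)$, and use subadditivity of $t\mapsto t^\beta$ ($\beta\le1$) to write $P_{u^\ast}^{\beta}\le W^{\beta}+R^{\beta}$, producing two contributions that I claim match the two sums in the statement.

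For the self term I would organize the tuples $x_1,\dots,x_{k-1}$ by the partition they induce on $\{1,\dots,k-1\}$; a partition into $l$ blocks of sizes $p_1,\dots,p_l$ carries collision probability $\Pr(A)\le q^{-m(l-1)}$. Since $W$ depends only on the tuple, $\ee_S[\1_AW^\beta]=W^\beta\Pr(A)$, and applying subadditivity once more to $W^\beta=(\sum_j P_X(v_j))^\beta\le\sum_{j}P_X(v_j)^\beta$ (now a harmless sum over only $l$ values, not all of $\cX$) turns each partition into $l$ terms in which one block is promoted from exponent $p_j$ to $p_j+\beta$. The block exponents then sum to exactly $(k-1)+\beta=\alpha$, so the interpolation inequality $\sum_xP_X(x)^{\gamma}\le\big(\sum_xP_X(x)^{\alpha}\big)^{(\gamma-1)/(\alpha-1)}$ (Hölder, valid for $\gamma\in[1,\alpha]$, which applies here precisely because every exponent lies in $[1,\alpha]$ once $\alpha>k-1$) makes the block product telescope to $q^{-(\alpha-l)H_\alpha(X)}$. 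Multiplying by $q^{m(\alpha-1)}q^{-m(l-1)}=q^{m(\alpha-l)}$ and counting the $l\stirlingII{k-1}{l}$ promoted partitions reproduces $\sum_l l\stirlingII{k-1}{l}q^{(\alpha-l)(m-H_\alpha(X))}$, the first sum.

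The delicate part, and the step I expect to be the main obstacle, is the external term $\ee_S[\1_AR^\beta]$. The naive moves fail: bounding $R^\beta$ by subadditivity reintroduces an uncontrolled factor $\sum_xP_X(x)^\beta$, while the crude Jensen bound $\ee_S[\1_AR^\beta]\le(\ee_S[\1_AR])^\beta$ is far too lossy, because $A$ is rare and raising a small probability to a power $\beta<1$ enlarges it. The right device is a Hölder split that preserves the rarity of $A$: writing $\1_AR^\beta=(\1_A)^{1-\beta}(\1_AR)^{\beta}$ and applying Hölder with conjugate exponents $\frac1{1-\beta},\frac1{\beta}$ gives
\[
\ee_S[\1_AR^\beta]\le\big(\Pr(A)\big)^{1-\beta}\big(\ee_S[\1_AR]\big)^{\beta}\le\big(q^{-m(l-1)}\big)^{1-\beta}\big(q^{-ml}\big)^{\beta}=q^{-m(l-1+\beta)},
\]
where $\ee_S[\1_AR]\le q^{-ml}$ because one external value forces a collision of $l+1$ distinct inputs (with $l$ the number of distinct values among $x_1,\dots,x_{k-1}$ and $l+1\le k$). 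Organizing by partitions as before, bounding the residual product $\prod_jS_{p_j}\le q^{-(k-1-l)H_\alpha(X)}$ by the same interpolation inequality, and multiplying by $q^{m(\alpha-1)}$ leaves the exponent $(\alpha-1)-(l-1+\beta)=k-1-l$ on $q^{m}$, yielding $\sum_l\stirlingII{k-1}{l}q^{(k-1-l)(m-H_\alpha(X))}$, which after reindexing $l\mapsto l-1$ is exactly the second sum. Adding the two contributions gives the claimed bound, and setting $\beta=1$ (i.e.\ $\alpha=k$) collapses everything back to Theorem~\ref{thm: main1}, a useful consistency check.
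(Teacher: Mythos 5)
Your proposal is correct and follows essentially the same route as the paper's Appendix~A proof: the same identity and expansion of the integer part $P_u^{k-1}$ against a fractional factor $P_u^{\alpha-k+1}$, the same self/external split of the colliding mass handled by subadditivity of $t\mapsto t^{\beta}$, the same partition-and-Stirling bookkeeping with the collision bounds $q^{-m(l-1)}$ and $q^{-ml}$ from $k^*$-universality, and the same H\"older interpolation between exponents. Your only deviations are cosmetic: for the external term you apply H\"older tuple-by-tuple to the seed expectation, $\ee_S[\1_A R^{\beta}]\le \Pr(A)^{1-\beta}\big(\ee_S[\1_A R]\big)^{\beta}$, where the paper applies the equivalent step (Jensen via concavity of $z^{\beta}$, which with indicator weights is the same inequality) once globally over each block $T_\cP$ with weights $W/C_\cP$; and you interpolate directly against exponent $\alpha$ to land on $H_{\alpha}(X)$ immediately, where the paper interpolates to $H_{\alpha-l+1}(X)$ (resp.\ $H_{k-l}(X)$ in the $A_2$ bound) and then invokes monotonicity of $\alpha\mapsto H_{\alpha}$ --- both choices are valid since all block exponents lie in $[1,\alpha]$ when $\alpha>k-1$, and both yield the identical final bound.
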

\begin{proof}
We begin with a sequence of straightforward calculations similar to the ones that led to \eqref{eq: u0} and \eqref{eq: u1}.
As before, we will use the notation $P_{X^{k-1}}(x^{k-1}):=P_X(x_1)\dots P_X(x_{k-1})$. Note that we isolate the last
term $x_k$ into a separate sum. We have
\begin{align}
     &q^{(\alpha-1)D_\alpha(P_{h(S;X),S}\|P_{U_m}P_S)}= q^{m(\alpha-1)}\sum_{s}P_S(s) \nonumber\\
    &\quad \times\sum_{u}\prod_{i=1}^{k-1}\Big[\!\sum_{x_i \in \cX}\!\1\{h(s,x_i) = u\}P_X(x_i)\Big]\nonumber\\
    &\qquad \times \Big[\sum_{x_{k} \in \cX}\1\{h(s,x_k) = u\}P_X(x_{k})\Big]^{\alpha-k+1} \nonumber\\
    & \quad = q^{m(\alpha-1)}\sum_{u}\sum_{s}P_S(s)\sum_{x^{k-1} \in \cX^{k-1}} P_{X^{k-1}}(x^{k-1})\nonumber \\ 
    & \qquad \times \1\{h(s,x_1) = \dots = h(s,x_{k-1}) = u\}\nonumber \\
    & \qquad \times \Big[\!\!\sum_{x_{k} \in \{x_1,\dots,x_{k-1}\}}\hspace*{-.3in}\1\{h(s,x_k) = u\}P_X(x_{k}) \nonumber\\
    & \qquad + \hspace*{-.3in}\sum_{x_{k} \in \cX\setminus \{x_1,\dots,x_{k-1}\}}\hspace*{-.3in}\1\{h(s,x_k) = u\}P_X(x_{k})\Big]^{\alpha-k+1}.
    \label{eq: alpha expand}
\end{align}
Let $a\in\rr_+^n$ and $0<\beta\le1$. From the monotonicity of $\ell_p$-norms, $\|a\|_1\le \|a\|_\beta$, or
\begin{align}\label{eq: beta power}
    (a_1+\dots+a_n)^\beta \leq a_1^\beta+\dots+a_n^\beta.
\end{align}
Noting that $0<\alpha-k+1\le 1$ and using \eqref{eq: beta power} (with $n=2$) in \eqref{eq: alpha expand} , we can write it as follows:
\begin{align}\label{eq: A1 + A2}
    q^{(\alpha-1)D_\alpha(P_{h(S;X),S}\|P_{U_m}P_S)} \leq A_1 + A_2,
\end{align}
where 
\begin{align*}
    A_1 & = q^{m(\alpha-1)}\sum_{u}\sum_{s}P_S(s) \sum_{x^{k-1} \in \cX^{k-1}} P_{X^{k-1}}(x^{k-1})\nonumber\\
        & \quad \times \1\{h(s,x_1) = \dots = h(s,x_{k-1}) = u\} \nonumber \\
        & \quad \times\Big(\sum_{x_{k} \in \{x_1,\dots,x_{k-1}\}}\1\{h(s,x_k) = u\}P_X(x_{k})\Big)^{\alpha-k+1}\
\end{align*}
and
\begin{align*}
    A_2 & = q^{m(\alpha-1)}\sum_{u}\sum_{s}P_S(s) \sum_{x^{k-1} \in \cX^{k-1}}  P_{X^{k-1}}(x^{k-1})\nonumber\\
    & \quad \times \1\{h(s,x_1) = \dots = h(s,x_{k-1}) = u\}\nonumber\\
& \quad \times\Big(\sum_{x_{k} \in \cX\setminus \{x_1,\dots,x_{k-1}\}}\1\{h(s,x_k) = u\}P_X(x_{k})\Big)^{\alpha-k+1}.  
\end{align*}

We now proceed to bound $A_1$ and $A_2$ separately. During this process, we will rearrange the order of summations, similar to the approach used in the proof of Theorem \ref{thm: main1}. We again work with partitions of the 
variables, although unlike the proof of Theorem \ref{thm: main1}, we now define $T$ as a partition of $\cX^{k-1}$ rather than $\cX^k$.

\vspace*{.1in}{\em Bounding $A_1$.}

\vspace*{.05in}
We will use elements of notation introduced in the proof of Theorem~\ref{thm: main1}. Given
a partition $\cP\in{\mathfrak P}_{k-1},$ let us
use \eqref{eq: beta power} for the last bracket in the expression for $A_1$ to obtain
\begin{align}
    A_1 & \leq q^{m(\alpha-1)}\sum_{u}\sum_{s}P_S(s) \sum_{x^{k-1} \in \cX^{k-1}} P_{X^{k-1}}(x^{k-1}) \nonumber \\
        & \quad \times \1\{h(s,x_1) = \dots = h(s,x_{k-1}) = u\} \nonumber \\
        & \quad \times \!\sum_{x_{k} \in \{x_1,\dots,x_{k-1}\}}\1\{h(s,x_k) = u\}^{\alpha-k+1}P_X(x_{k})^{\alpha-k+1}\nonumber \\[.1in]
        & = q^{m(\alpha-1)}\sum_{u}\sum_{s}P_S(s) \sum_{x^{k-1} \in \cX^{k-1}} P_{X^{k-1}}(x^{k-1})\nonumber\\
        & \quad \times \sum_{x_{k} \in \{x_1,\dots,x_{k-1}\}}P_X(x_{k})^{\alpha-k+1}\nonumber \nonumber \\
        & \quad \times  \1\{h(s,x_1) = \dots = h(s,x_{k}) = u\}\nonumber \\[.1in]
        & = q^{m(\alpha-1)}\sum_{u} \sum_{x^{k-1} \in \cX^{k-1}} P_{X^{k-1}}(x^{k-1})\nonumber \\
        & \quad \times \sum_{x_{k} \in \{x_1,\dots,x_{k-1}\}}P_X(x_{k})^{\alpha-k+1}\nonumber \\
        & \quad \times \Pr_{S}(h(S,x_1) = \dots = h(S,x_{k}) = u)\nonumber \\[.1in]
        & = q^{m(\alpha-1)} \hspace{-8 pt} \sum_{x^k \in \cX^{k-1}} \hspace{-8 pt} P_{X^{k-1}}(x^{k-1})\hspace{-8 pt}\sum_{x_{k} \in \{x_1,\dots,x_{k-1}\}} \hspace{-8 pt} P_X(x_{k})^{\alpha-k+1}\nonumber \\
        & \quad \times \Pr_{S}(h(S,x_1) = \dots = h(S,x_{k}) )\nonumber \\[.1in]
        & \leq q^{m(\alpha-1)} \sum_{\cP \in \mathfrak P_{k-1}}\sum_{x^{k-1} \in T_\cP} P_{X^{k-1}}(x^{k-1}) \nonumber \\
        & \quad \times  \sum_{x_{k} \in \{x_1,\dots,x_{k-1}\}}\hspace*{-.2in}P_X(x_{k})^{\alpha-k+1}q^{-m(\eta(x^{k-1})-1)},\label{eq: A_1 b1}
\end{align}
where as before in the proof of Theorem~\ref{thm: main1}, $\eta(\cdot)$ is the number of distinct entries in the argument tuple.
Now let us fix an $l$-partition $\cP \in {\mathfrak P}_{k-1}(l),1\le l\le k-1$ and recall the notation $p_i,\pi_i$ for the size of the $i$th block and for its element, respectively. Observe that in this case $\eta(x^{k-1}) = l$. 
Let us bound the following sum:
\begin{align}
     \sum_{x^{k-1} \in T_\cP} &P_{X^{k-1}}(x^{k-1}) \sum_{x_{k} \in \{x_1,\dots,x_{k-1}\}}P_X(x_{k})^{\alpha-k+1}\nonumber \\
    &= \sum_{x^{k-1} \in T_\cP} \prod_{j=1}^l P_X(x_{\pi_j})^{p_j} \sum_{i = 1}^l P_X(x_{\pi_i})^{\alpha -k +1 } \nonumber\\
    &= \sum_{i = 1}^l \sum_{x^{k-1} \in T_\cP} \prod_{j=1}^l P_X(x_{\pi_j})^{q_j(i)}  \label{eq: A_1 b20}, 
\end{align}
where 
\begin{align*}
        q_j(i) =
\begin{cases}
		p_j + (\alpha -k +1) & \text{ if }  i = j \\
		p_j & \text{ otherwise .}	
\end{cases}
\end{align*}

This exponent appears in \eqref{eq: A_1 b20} because we lump together the probabilities of the elements of the $j$th
block and the added element $x_k$ which falls in this block. 
 
By relaxing the requirement that variables in different blocks must be distinct,
\begin{align}
      \sum_{x^{k-1} \in T_\cP} &P_{X^{k-1}}(x^{k-1}) \sum_{x_{k} \in \{x_1,\dots,x_{k-1}\}}P_X(x_{k})^{\alpha-k+1}\nonumber \\
    &= \sum_{i = 1}^l \sum_{x^{k-1} \in T_\cP} \prod_{j=1}^l P_X(x_{\pi_j})^{q_j(i)} \nonumber\\
    &\leq \sum_{i = 1}^l \prod_{j=1}^l \sum_{x \in \cX}  P_X(x)^{q_j(i)}. \label{eq: A_1 b2}
\end{align}

Observe that for any $i$, $\sum_jq_j(i) = \alpha$ and $q_j(i) \geq 1$. Now let us fix $i,j$ and evaluate the sum on $x$ in \eqref{eq: A_1 b2}. If $q_j(i) = 1$, this sum is equal to 1. Otherwise, applying 
H{\"o}lder's inequality as in \eqref{eq: Holder}, we obtain 
\begin{align}\label{eq: Holder3}
    \sum_{x \in \cX}P_X(x)^{q_j(i)} \leq \Big(\sum_{x \in \cX}P_X(x)^{\alpha-l+1}\Big)^{\frac{q_j(i)-1}{\alpha-l}}.
\end{align}
Arguing as in \eqref{eq: prob bound}, we now obtain
\begin{align}\label{eq: A_1 b3}
    \prod_{j=1}^l \sum_{x \in \cX}  P_X(x)^{q_j(i)}  \leq q^{-(\alpha-l)H_{\alpha-l+1}(X)}.
\end{align}

Finally, let us substitute \eqref{eq: A_1 b2} and \eqref{eq: A_1 b3} into \eqref{eq: A_1 b1} to obtain
a bound for $A_1$:
\begin{align}
    A_1 
    & \leq q^{m(\alpha-1)}\sum_{l=1}^{k-1} \sum_{\cP \in \mathfrak P_{k-1}(l)} \sum_{i = 1}^lq^{-(\alpha-l)H_{\alpha-l+1}(X)} q^{-m(l-1)} \nonumber \\
    & = \sum_{l=1}^{k-1}\stirlingII{k-1}{l}\sum_{i = 1}^lq^{-(\alpha-l)H_{\alpha-l+1}(X)+m(\alpha-l)}\nonumber \\
    & = \sum_{l=1}^{k-1}l\stirlingII{k-1}{l} q^{(\alpha-l)(m-H_{\alpha-l+1}(X))}\nonumber \\
    & \leq  \sum_{l=1}^{k-1}l\stirlingII{k-1}{l} q^{(\alpha-l)(m-H_{\alpha}(X))}\label{eq: A_1 b4},
\end{align}
where the final inequality follows from the monotonicity of \Renyi entropy.

\vspace*{.1in}{\em Bounding $A_2$.}

\vspace*{.05in} We will rearrange the sums in $A_2$. To shorten the writing, let $\1_{(k-1)}(u):=\1\{h(s,x_1) = \dots = h(s,x_{k-1}) = u\}$. With this, we have
\begin{align}
    A_2 &= q^{m(\alpha-1)}\sum_{u}\sum_{s}P_S(s)\nonumber\\
    & \quad \times \sum_{l=1}^{k-1}\sum_{\cP \in \mathfrak P_{k-1}(l)}\sum_{x^{k-1} \in T_{\cP}} P_{X^{k-1}}(x^{k-1})\1_{(k-1)}(u)\nonumber\\
    & \quad \times 
        \Big[\sum_{x_{k} \in \cX\setminus \{x_1,\dots,x_{k-1}\}} 
        \1\{h(s,x_k) = u\}P_X(x_{k})\Big]^{\alpha-k+1}\nonumber\\
    &= \sum_{l=1}^{k-1}\!\sum_{\cP \in \mathfrak P_{k-1}(l)}\!\sum_{u} q^{m(\alpha-1)}\hspace{-8 pt}
\sum_{x^{k-1} \in T_{\cP}} \hspace{-8 pt}P_{X^{k-1}}(x^{k-1})\sum_{s}P_S(s)\nonumber\\
    &  \times  \1_{(k-1)}(u)
        \Big[\hspace*{-8 pt}\sum_{ \hspace*{-6 pt} x_{k} \in \cX\setminus \{x_1,\dots,x_{k-1}\}} \hspace{-12 pt}
        \1\{h(s,x_k) = u\}P_X(x_{k})\Big]^{\alpha-k+1}\nonumber\\
    &=\sum_{l=1}^{k-1} \sum_{\cP \in \mathfrak P_{k-1}(l)}B_{\cP}\label{eq: A2 bound},
\end{align}
where we have denoted the sum on $u$ by $B_\cP$. Let us fix a partition $\cP\in{\mathfrak P}_{k-1}$ and let $l=|\cP|$. Further, define 
\begin{gather}
    W(u,x^{k-1},s) := q^{m(\alpha-1)}P_{X^{k-1}}(x^{k-1})P_S(s)\1_{(k-1)}(u),  \\
        C_{\cP}  :=  \sum_{u}\sum_{x^{k-1} \in T_{\cP}}\sum_{s}W(u,x^{k-1},s). \label{eq: CP}
\end{gather}
Let us bound $C_{\cP}$ from above. First, we rewrite it as follows: 
\begin{align}
    C_{\cP} 
    & = q^{m(\alpha-1)}\sum_{u}\sum_{x^{k-1} \in T_{\cP}} P_{X^{k-1}}(x^{k-1}) \sum_{s}P_S(s)\1_{(k-1)}(u) \nonumber \\
    & = q^{m(\alpha-1)}\sum_{x^{k-1} \in T_{\cP}} P_{X^{k-1}}(x^{k-1}) \nonumber \\
    & \quad \times \sum_{u}\Pr_S\big(h(S,x_1) = \dots = h(S,x_{k-1}) = u \big)\nonumber \\
    & = q^{m(\alpha-1)}\sum_{x^{k-1} \in T_{\cP}} P_{X^{k-1}}(x^{k-1}) \nonumber\\
    & \quad \times \Pr_S\big(h(S,x_1) = \dots = h(S,x_{k-1}) \big)\nonumber \\
    & \leq q^{m(\alpha-1)}\sum_{x^{k-1} \in T_{\cP}} P_{X^{k-1}}(x^{k-1}) q^{-m(l-1)}\nonumber\\
    & = q^{m(\alpha-l)}\sum_{x^{k-1} \in T_{\cP}} P_{X^{k-1}}(x^{k-1}). \label{eq: C_p exp}
\end{align}
Now use \eqref{eq: prob bound} with $k$ replaced with $k-1$:
    \begin{align}\label{eq: C_p exp2}
        C_{\cP} \leq  q^{m(\alpha-l)}q^{-(k-1-l)H_{k-l}(X)}.
    \end{align}
Next, return to bounding $B_\cP$:
\begin{align}
    B_{\cP} 
    & = \sum_{u}\sum_{x^{k-1} \in T_{\cP}}\sum_{s}W(u,x^{k-1},s)\nonumber\\
    & \quad \times \Big[\hspace*{-.2in}\sum_{\hspace*{.2in}x_{k} \in \cX\setminus \{x_1,\dots,x_{k-1}\}} \hspace*{-.2in}
        \1\{h(s,x_k) = u\}P_X(x_{k})\Big]^{\alpha-k+1} \nonumber\\
    & = C_{\cP}\sum_{u}\sum_{x^{k-1} \in T_{\cP}}\sum_{s}\frac{W(u,x^{k-1},s)}{C_{\cP}}\nonumber\\
    & \quad \times\Big[\hspace*{-.2in}\sum_{\hspace*{.2in}x_{k} \in \cX\setminus \{x_1,\dots,x_{k-1}\}} \hspace*{-.2in}
        \1\{h(s,x_k) = u\}P_X(x_{k})\Big]^{\alpha-k+1}\nonumber\\
    & \leq C_{\cP}\Big(\sum_{u}\sum_{x^{k-1} \in T_{\cP}} \sum_{s}\frac{W(u,x^{k-1},s)}{C_{\cP}}\nonumber\\
    & \quad \times\sum_{x_{k} \in \cX\setminus \{x_1,\dots,x_{k-1}\}} \hspace*{-.2in} \1\{h(s,x_k) = u\}P_X(x_{k})\Big)^{\alpha-k+1},\label{eq: Bp bound}
\end{align}
where the last expression is obtained from the concavity of the function $z^{\alpha-k+1}, z>0$. Indeed, note that 
$0<\alpha-k+1\le 1$ and the weights form a probability vector by \eqref{eq: CP}, so Jensen's inequality applies.
Simplifying the expression in the parentheses, we further obtain
\begin{align}
    & \sum_{u}\sum_{x^{k-1} \in T_{\cP}}\sum_{s}\frac{W(u,x^{k-1},s)}{C_{\cP}}\nonumber\\
    & \quad \times \sum_{x_{k} \in \cX\setminus \{x_1,\dots,x_{k-1}\}} \hspace*{-.2in} \1\{h(s,x_k) = u\}P_X(x_{k})\nonumber\\
    &= \frac{q^{m(\alpha-1)}}{C_{\cP}}\sum_{u}\sum_{x^{k-1} \in T_{\cP}}
    \sum_{x_{k} \in \cX\setminus \{x_1,\dots,x_{k-1}\}}\hspace*{-.2in} P_{X^k}(x^k) \nonumber\\
    & \quad \times \sum_{s}P_S(s)\1\{h(s,x_1) = h(s,x_k) = u\} \nonumber\\
    &= \frac{q^{m(\alpha-1)}}{C_{\cP}}\sum_{x^{k-1} \in T_{\cP}}\sum_{x_{k} \in \cX\setminus \{x_1,\dots,x_{k-1}\}}\hspace*{-.2in} P_{X^k}(x^k) \nonumber\\
    & \quad \times \Pr_S\big(h(S,x_1) = \dots = h(S,x_k)\big) \nonumber\\
    & \leq \frac{q^{m(\alpha-1)}}{C_{\cP}}\sum_{x^{k-1} \in T_{\cP}}\sum_{x_{k} \in \cX\setminus \{x_1,\dots,x_{k-1}\}}\hspace*{-.2in} P_{X^k}(x^k) q^{-m(\eta(x^{k})-1)} \nonumber\\
    &\leq  \frac{q^{m(\alpha-1)}}{C_{\cP}}\sum_{x^{k-1} \in T_{\cP}} P_{X^{k-1}}(x^{k-1}) q^{-ml} \nonumber\\
    &\leq  \frac{q^{m(\alpha-l-1)}q^{-(k-l-1)H_{k-l}(X)}}{C_\cP}, \label{eq: temp3}
\end{align}
where on the third-to-last line we used the definition of the $k^*$-universal hash function, and 
where the last inequality follows upon substituting for $P_{X^{k-1}}(x^{k-1})$ as in \eqref{eq: prob bound}.
Using \eqref{eq: temp3} in \eqref{eq: Bp bound}, we obtain 
\begin{align*}
    B_{\cP} 
    & \leq C_{\cP}\Big( \frac{q^{m(\alpha-l-1)}q^{-(k-l-1)H_{k-l}(X)}}{C_\cP}\Big)^{\alpha-k+1} \\
    & = C_{\cP}^{k-\alpha} \big(q^{m(\alpha-l-1)}q^{-(k-l-1)H_{k-l}(X)}\big)^{\alpha-k+1}.
\end{align*}
Applying \eqref{eq: C_p exp2},
\begin{align*}
    B_{\cP} 
    & \leq \big(q^{m(\alpha-l)}q^{-(k-l-1)H_{k-l}(X)}\big)^{k-\alpha}\\
    & \quad \times
    \big(q^{m(\alpha-l-1)}q^{-(k-l-1)H_{k-l}(X)}\big)^{\alpha-k+1}\\
    & = q^{(k-l-1)(m-H_{k-l}(X))}.
\end{align*}
Substituting this back to \eqref{eq: A2 bound},
\begin{align}
    A_2
    & = \sum_{l=1}^{k-1}\stirlingII{k-1}{l}q^{(k-1-l)(m-H_{k-l}(X))}\nonumber\\
    & \leq \sum_{l = 1}^{k-1}\stirlingII{k-1}{l}q^{(k-1-l)(m-H_{\alpha}(X))}\nonumber\\
    & = \sum_{l = 1}^{k}\stirlingII{k-1}{l-1}q^{(k-l)(m-H_{\alpha}(X))}\label{eq: A_2 b_2},
\end{align}
where the second inequality follows from the monotonicity of $H_\alpha(\cdot)$ on $\alpha$, and the last step \eqref{eq: A_2 b_2} uses $\stirlingII{k-1}{0}=0$.
Now using the estimates \eqref{eq: A_1 b4} and \eqref{eq: A_2 b_2} in \eqref{eq: A1 + A2} completes the proof.
\end{proof}

\bibliographystyle{IEEEtranS}
\bibliography{smoothing}

\vfill
\end{document}